\newtheorem{theorem}{Theorem}
\newtheorem{corollary}{Corollary}
\newtheorem{lemma}{Lemma}
\newtheorem{assumption}{Assumption}
\newtheorem{proposition}{Proposition}
\theoremstyle{remark}
\newtheorem{remark}{Remark}
\theoremstyle{definition}
\newtheorem{definition}{Definition}
\newtheorem{example}{Example}
\newenvironment{problem}{
  \par\noindent\textbf{Problem.}\ }%
  {\par}
\newenvironment{proof}[1][Proof]{%
  \par\noindent\textbf{#1.}%
\ }{\hfill\rule{1.5mm}{1.5mm}\par}
\newcommand{\R}{\mathbb{R}}
\newcommand{\U}{\mathcal{U}}
\newcommand{\Lip}{\mathcal{L}}
\newcommand{\Kie}{\mathcal{K}_\infty^e}
\newcommand{\Ki}{\mathcal{K}_\infty}
\newcommand{\K}{\mathcal{K}}
\newcommand{\KL}{\mathcal{KL}}
\DeclareMathOperator*{\argmin}{argmin}
\begin{document}

\begin{frontmatter}

\title{Safety Under State Uncertainty: 
\\Robustifying Control Barrier Functions\thanksref{footnoteinfo}} 

\thanks[footnoteinfo]{
This work was partially supported by TII under project \#A6847 and by NSF award 2211146.
}

\author[UCLA]{Rahal Nanayakkara}\ead{rahaln@ucla.edu},    
\author[Caltech]{Aaron D. Ames}\ead{ames@caltech.edu},               
\author[UCLA]{Paulo Tabuada}\ead{tabuada@ee.ucla.edu}  

\address[UCLA]{Electrical and Computer Engineering Department, University of California at Los Angeles, USA}  
\address[Caltech]{Department of Mechanical and Civil Engineering, California Institute of Technology, USA}             

\begin{keyword}                           
Safe Control, Nonlinear Control             
\end{keyword}                             

\begin{abstract}                          
Safety-critical control is a crucial aspect of modern systems, and Control Barrier Functions (CBFs) have gained popularity as the framework of choice for ensuring safety.
However, implementing a CBF requires exact knowledge of the true state, 
a requirement that is often violated in real-world applications where only noisy or estimated state information is available.
This paper introduces the notion of Robust Control Barrier Functions (R-CBF) for ensuring safety under such state uncertainty
without requiring prior knowledge of the magnitude of uncertainty. 
We formally characterize the class of robustifying terms that ensure robust closed-loop safety and show how a robustly safe controller can be constructed.
We demonstrate the effectiveness of this approach through simulations and compare it to existing methods, highlighting the additional robustness and convergence guarantees it provides.

\end{abstract}

\end{frontmatter}

\section{Introduction}

Safety is a fundamental consideration in modern control system design. Formally, ensuring safety entails restricting the trajectories of a dynamical system to a predefined safe set.
In recent years, Control Barrier Functions (CBFs), originally introduced in \cite{cbf_main}, have emerged as the framework of choice for enforcing safety across a broad spectrum of applications including robotic systems \cite{cbf_robot}, autonomous vehicles \cite{cbf_aut_veh}, and satellites \cite{cbf_satellite}.
The CBF methodology requires that a controller satisfies a certain inequality to guarantee that the trajectories of the closed-loop system remain within the safe set.

While standard CBFs exhibit inherent robustness to small model perturbations, as shown in \cite{cbf_robustness}, various extensions have been proposed to address broader classes of uncertainties that frequently arise in practice.
These methods either seek to enforce safety of the original set for small amounts of uncertainty \cite{jankovic2018robust} or enforce safety on an inflated version of the safe set \cite{issf,tissf}, where the inflation is dependent on the amount of uncertainty. Typically, all these results are achieved by strengthening the regular CBF\footnote{We use the expression \emph{regular CBF} to refer to the original CBF condition introduced in equation (25) of \cite{cbf_main}.} inequality with additional robustifying terms, where the nature of the additional term is dependent on the type of uncertainty being addressed. For instance, \cite{ersin_dist_obs,xu2023disturbance} deals with unmodeled dynamics, while \cite{unmodeled_dynamics,sector_uncert} deals with input disturbances. A summary of some such common robustifying modifications can be found in \cite{parameterized2023ames}.

However, most of these modifications assume availability of the true state when computing the control input and verifying the modified CBF inequality,
and relatively less work has been done to address the uncertainty arising from using a state estimate for control. In practice though, this becomes a crucial aspect that must be taken into consideration since most implementations of control systems depend on observers or noisy measurements of the state, resulting in a state estimate which will often differ from the true state.
In this paper, we revisit the problem of ensuring robustness of the CBF inequality against uncertainty in the state estimate. 
The approaches taken by previous work to address this problem can broadly be categorized into two approaches. 

One common approach involves designing specialized observers and leveraging their properties to enforce safety. 
For instance, \cite{panagou2022safe} shows that safety constraints can be enforced when an input-to-state stable observer is available, or, under more restrictive assumptions on the system dynamics, when a general “bounded-error” observer is used.

In \cite{xu2022observerCBF}, state uncertainty is addressed using function approximation techniques and an adaptive law. However, the approach is limited to a finite time horizon $T$, requiring either a large $T$ (which results in a large number of parameters to be estimated) or periodic resets.

A second line of work focuses on strengthening the standard CBF inequality with additional terms, in line with techniques from robust CBF literature. In \cite{mrcbf2021guaranteeing,mrcbf2_iros}, the authors introduce the notion of a Measurement-Robust Control Barrier Function (MR-CBF), which guarantees safety under a known bound on state uncertainty. While this method ensures forward invariance of the original safe set when the bound is respected, it lacks the robustness and asymptotic convergence guarantees of standard CBFs, as we demonstrate in this paper. Moreover, an MR-CBF may not exist even when a regular CBF does.

The approach in this paper shares the following characteristics with the work in \cite{mrcbf2021guaranteeing}: 1) we strengthen the regular CBF inequality; 2) no assumptions are made on how the state estimate is obtained, i.e., no specific classes of observers are required. We refer to the novel CBF inequality proposed in this paper as a Robust Control Barrier Function (R-CBF). 
However, in contrast to the work in \cite{mrcbf2021guaranteeing} our method possesses several advantages: 1) an R-CBF always exists whenever a CBF exists, provided there are no input constraints; 2) it always guarantees boundedness of solutions independently of the level of uncertainty; 3) neither constructing an R-CBF nor computing the controller it induces requires knowledge of a bound on the uncertainty.

The main contributions of this paper are to axiomatically identify a class of robustifying terms leading to desired robustness properties, and to prove the resulting guarantees on the closed-loop system. 
As an intermediate step in obtaining our final result, 
we start by analyzing robustness with respect to actuation errors,
yielding stronger guarantees than what is available in the existing literature.

The structure of this paper is as follows. In Section \ref{sct:prelims} we introduce the notation to be used throughout the paper and provide a brief overview of the CBF approach for enforcing safety. Next, in Section \ref{sct:main_result} we present our main result along with a few extensions/modifications and illustrate the use of an R-CBF in a practical system.
Finally, in Section \ref{sct:comparison_to_mrcbfs} we compare our approach to the MR-CBF approach and highlight the clear advantages that it possess in terms of robustness properties.




\section{Preliminaries and Background} \label{sct:prelims}

\subsection{Notation}

Let $\R_{>0}$ and $\R_{\geq 0}$ denote the set of positive and non-negative real numbers, respectively. 
We denote by $\Vert \cdot \Vert$ the Euclidean norm and by \mbox{$B_\delta(y) = \{ x \in \R^n : \Vert x - y \Vert \leq \delta\}$}, the closed ball of radius $\delta$ centered at $y$ in $\R^n$.
The boundary of a set $C$ will be denoted by $\partial C$. For any $f:\R^n \to \R^n$, $g : \R^n \to \R^{n \times m}$ and differentiable $h:\R^n \to \R$ we denote by $L_fh(x) = \nabla_xh(x) \cdot f(x)$ and $L_gh(x) = \nabla_xh(x) \cdot g(x)$ the Lie derivatives of $h$ with respect to $f$ and $g$ respectively.

We call a continuous function $\alpha:\R_{\geq0} \to \R_{\geq0}$ a class $\K$ function (i.e., $\alpha \in \K$) if it is strictly monotonically increasing and satisfies $\alpha(0)=0$. We say $\alpha:\R \to \R$ is an extended class $\Ki$ function (i.e., $\alpha \in \Kie$) if it is strictly monotonically increasing and satisfies $\alpha(0)=0$, $\lim_{r \to -\infty}\alpha(r) =-\infty$ and $\lim_{r \to \infty}\alpha(r) =\infty$.
We call a continuous function $\beta : \R_{\geq0} \times \R_{\geq0} \to \R_{\geq0}$ a class $\KL$ function ($\beta \in \KL$) if for each fixed $t>0$, $\beta(\cdot,t)$ is of class $\K$ and for each fixed $s>0$, $\beta(s,\cdot)$ is decreasing and $\lim_{t\to\infty} \beta(s,t)=0$.

Recall that, if a function $\phi : \R^n \to \R^m$ is locally Lipschitz continuous, then for any compact set $B \subset \R^n$, there exists $\mathcal{L}_\phi^B \in \R_{\ge0}$ such that:
\begin{equation}
    \Vert \phi(x) - \phi(y) \Vert \leq \mathcal{L}_\phi^B \Vert x - y \Vert,
    \label{eq:lip_def}
\end{equation}
for all $x,y \in B$, where $\mathcal{L}_\phi^B$ is called the Lipschitz constant of $\phi$ on $B$. If $\phi$ is Lipschitz continuous, then (\ref{eq:lip_def}) holds for all $x,y \in \R^n$, and we denote the Lipschitz constant as $\mathcal{L}_\phi$.

\subsection{Background}

We consider a control affine system of the form:
\begin{equation}
    \dot{x} = f(x) + g(x)u,
    \label{eq:sys}
\end{equation}
where $x \in \R^n$ is the state and $u \in \U$ is the control input, for which $\U \subseteq \R^m$ defines the set of admissible inputs. The functions $f : \R^n \to \R^n$ and $g : \R^n \to \R^{n \times m}$ are assumed to be locally Lipschitz continuous. Given a locally Lipschitz continuous controller $k : \R^n \to \R^m$, we define the closed loop system with the input $u=k(x)$ to be:
\begin{equation}
    \dot{x} = f(x)+g(x)k(x) = f_{\text{cl}}(x). \label{eq:cl_sys}
\end{equation}
Given an initial condition $x_0 \in \R^n$ when $t=0$, we denote the solution to (\ref{eq:cl_sys}) at time $t=\tau$ by $x(\tau,x_0)$.

Safety is formally defined as restricting the solution of (\ref{eq:sys}) to a set $S \subset \R^n$ for all time $t \in \R_{\geq 0}$, i.e., it is synonymous with the forward invariance of the set $S$. We begin our exposition of safety by first defining set invariance and asymptotic stability.

\begin{definition}
    A set $S\subseteq \R^n$ is said to be forward invariant under the dynamics (\ref{eq:cl_sys}) if\footnote{We assume the solutions of (\ref{eq:cl_sys}) are forward complete.}:
    \begin{equation}
        x_0 \in S \implies x(t, x_0) \in S, \quad \forall t>0.
    \label{eq:fwd_invariance_def}
\end{equation}
\end{definition}

\begin{definition}
    A compact set $S \subset \R^n$ is said to be asymptotically stable with a region of attraction $\mathcal{D} \subseteq \R^n$ for the dynamics (\ref{eq:cl_sys}) if:
    \begin{equation}
        \mathbf{d}(x(t, x_0), S) \leq \beta(\mathbf{d}(x_0, S), t), \quad \forall x_0 \in \mathcal{D}, \, \forall t>0,
        \label{eq:ass_stability_def}
    \end{equation}
    where $\beta \in \KL$ and $\mathbf{d}(x,S) = \inf_{y \in S}\Vert x - y \Vert$
    is the distance from a point $x$ to the set $S$.
\end{definition}

\begin{remark}
An asymptotically stable set is necessarily forward invariant. While safety is typically understood as forward invariance, in practice it is often desirable to obtain stronger guarantees such as asymptotic stability.
\end{remark}

\subsection{Control Barrier Functions}

To formulate safety in the context of CBFs, the safe set $S \subset \R^n$ is defined as the 0-superlevel set of a continuously differentiable function $h : \R^n \to \R$:
\begin{equation}
    S = \{ x \in \R^n : h(x) \geq 0 \},
    \label{eq:safe_set_S}
\end{equation}
with zero being a regular value of $h$, i.e., $\frac{dh}{dx} \neq 0$ when $h(x)=0$. Here, $S$ is assumed to be non-empty and have no isolated points. 

\begin{definition}
    (\textit{Control Barrier Function (CBF) \cite{cbf_main}}) A continuously differentiable function, $h : \R^n \to \R$ is called a \textit{Control Barrier Function (CBF)} for the system (\ref{eq:sys}), if there exists $\alpha \in \Kie$ such that: \vspace{-5pt}
    \begin{equation}
        \sup_{u \in \U} L_fh(x)+L_gh(x)u+\alpha(h(x)) > 0,
        \label{eq:cbf_def}
    \end{equation} \vspace{-5pt}
    for all $x \in \R^n$.
\end{definition}
When $\U = \R^m$, the condition (\ref{eq:cbf_def}) can be equivalently stated as:
\begin{equation}
    L_gh(x)=0 \implies L_fh(x)+\alpha(h(x)) > 0.
    \label{eq:valid_cbf}
\end{equation}
It was shown in \cite{cbf_main,cbf_robustness} that, given a CBF $h$ and a controller $k : \R^n \to \R^m$ that satisfies:
    \begin{equation}
        L_fh(x)+L_gh(x) k(x) + \alpha(h(x))\geq 0, \quad \forall x \in \R^n, \label{eq:cbf_contr}
    \end{equation}
the safe set $S$ will be rendered forward invariant and asymptotically stable for the dynamics (\ref{eq:cl_sys}).

We also re-state Proposition 4 of \cite{cbf_robustness} as the following Lemma, which we shall refer to in the proof our main Theorems.

\begin{lemma} \label{lem:h_invariance}
    Let $\tilde{h} : \R^n \to \R$ be a continuously differentiable function with 0 being a regular value of $\tilde{h}$. For the dynamics (\ref{eq:cl_sys}), if:
    \begin{equation}
        \dot{\tilde{h}}(x) + \alpha(\tilde{h}(x)) \geq 0, \quad \forall x\in \R^n, \label{eq:h_dot_condition}
    \end{equation}
    where $\alpha \in \Kie$, then the set $\tilde{S} = \{x \in \R^n: \tilde{h}(x)\geq 0 \}$ is forward invariant and asymptotically stable.
\end{lemma}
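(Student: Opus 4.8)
The plan is to show that along any solution $x(t,x_0)$ of (\ref{eq:cl_sys}), the scalar signal $v(t) = \tilde{h}(x(t,x_0))$ satisfies a differential inequality $\dot{v}(t) \geq -\alpha(v(t))$, and then to translate bounds on $v$ into the two claimed conclusions about $\tilde{S}$. The first task is a comparison-lemma argument: since $f_{\text{cl}}$ is locally Lipschitz and $\tilde{h}$ is continuously differentiable, $v$ is locally absolutely continuous and $\dot{v}(t) = \dot{\tilde{h}}(x(t,x_0)) \geq -\alpha(v(t))$ by (\ref{eq:h_dot_condition}). Comparing with the solution $\eta(t)$ of $\dot{\eta} = -\alpha(\eta)$, $\eta(0) = v(0)$, the standard comparison lemma gives $v(t) \geq \eta(t)$ for all $t \geq 0$.

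For forward invariance, I would take $x_0 \in \tilde{S}$, i.e.\ $v(0) \geq 0$. If $v(0) > 0$, then since $\alpha \in \Kie$ vanishes only at $0$, the equilibrium $\eta \equiv 0$ of $\dot{\eta} = -\alpha(\eta)$ cannot be crossed in finite time (uniqueness of solutions through $0$ when $\alpha$ is, say, locally Lipschitz; otherwise a direct argument using monotonicity of $\alpha$ shows $\eta(t) \geq 0$), so $\eta(t) \geq 0$ and hence $v(t) \geq 0$ for all $t$. If $v(0) = 0$, the same comparison gives $v(t) \geq \eta(t) = 0$. Thus $x(t,x_0) \in \tilde{S}$ for all $t > 0$.

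For asymptotic stability of $\tilde{S}$, I would work with points outside $\tilde{S}$, where $\tilde{h} < 0$. Here the key is to relate $|\tilde{h}(x)|$ to $\mathbf{d}(x,\tilde{S})$: because $0$ is a regular value of $\tilde{h}$, near $\partial\tilde{S}$ one can bound $\mathbf{d}(x,\tilde{S})$ above and below by class-$\K$ functions of $-\tilde{h}(x) = |\tilde{h}(x)|$, at least on the relevant sublevel sets; globally one uses that $\tilde{h}$ is proper on such sets or restricts to a region of attraction $\mathcal{D}$. Using the comparison bound, for $v(0) < 0$ the scalar ODE $\dot{\eta} = -\alpha(\eta)$ with $\alpha \in \Kie$ drives $\eta(t) \to 0$, and indeed $|\eta(t)| \leq \sigma(|\eta(0)|, t)$ for some $\sigma \in \KL$ (this is itself essentially the scalar comparison/Lyapunov estimate). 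Chaining the $\K$-bounds with this $\KL$ estimate produces the desired $\beta \in \KL$ with $\mathbf{d}(x(t,x_0),\tilde{S}) \leq \beta(\mathbf{d}(x_0,\tilde{S}), t)$.

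The main obstacle is the passage from the scalar estimate on $\tilde{h}$ to a genuine $\KL$ bound on the set distance $\mathbf{d}(\cdot,\tilde{S})$: one must argue carefully that $-\tilde{h}$ is comparable to $\mathbf{d}(\cdot,\tilde{S})$ in both directions on the sublevel sets through which the trajectory passes, which is where the regular-value hypothesis and a compactness argument enter. Since this lemma is quoted from \cite{cbf_robustness}, I would either invoke that reference directly or reproduce the estimate on a neighborhood of $\partial\tilde{S}$ and note that trajectories starting in any compact sublevel set enter and remain in that neighborhood after finite time.
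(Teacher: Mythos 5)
The paper never proves this lemma itself---it is imported verbatim as Proposition~4 of \cite{cbf_robustness}---so there is nothing in the paper to diverge from, and your comparison-lemma reconstruction (scalar comparison with $\dot{\eta}=-\alpha(\eta)$ for forward invariance, then a $\KL$ estimate on $-\tilde{h}$ converted into a bound on $\mathbf{d}(\cdot,\tilde{S})$ via the regular-value and compactness argument) is the standard argument and essentially the one in that reference. You also correctly flag the only genuinely delicate points: the comparison/$\KL$ step needs a locally Lipschitz minorant of $\alpha$ (or maximal solutions), since $\Kie$ functions are merely continuous, and the two-sided comparability of $-\tilde{h}$ with the set distance, which is exactly where the regular-value hypothesis and compactness (consistent with Definition~2, which already assumes the set is compact) are consumed.
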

This lemma essentially states that given \textit{any} continuously differentiable function $\tilde{h}$ that satisfies (\ref{eq:h_dot_condition}), its 0-superlevel set will be forward invariant and asymptotically stable.



\section{Main Result}
\label{sct:main_result}

Generating a safe controller using a CBF requires knowledge of the true state $x$. In most practical scenarios, we typically only have an estimate $\hat{x}$ of the state $x$ and an estimation error bound $\delta  \in \R_{\geq0}$ such that: 
\begin{equation}
    \Vert x(t)-\hat{x}(t)\Vert\le \delta,\qquad \forall t\in \R_{\ge 0}.
    \label{eq:norm}
\end{equation}
Now the challenge lies in selecting a control input $u$ based solely on the information available in $\hat{x}$ to provide desirable safety guarantees.
The following problem statement formalizes a type of safety guarantee that may be desirable in practice:

\begin{problem}
    Given only an estimate $\hat{x}$ of the true state $x$, with the guarantee (\ref{eq:norm}), design a control law \mbox{$k:\R^n \to \R^m$} such that the closed loop system defined by ($\ref{eq:sys}$) and $u=k(\hat{x})$ renders the set $S$ safe for small $\delta$ and renders an inflated set $S_\delta \supseteq S$ safe for larger $\delta$.
\end{problem}

To address this problem, we first characterize a class of functions with desirable properties in the following definition.
\begin{definition} \label{def:rob_func}
    A continuous function $\rho : \R_{\geq 0} \to \R_{\geq 0}$ is called a robustness function if it satisfies: \vspace{-5pt}
    \begin{enumerate}
        \item $\rho(0) = 0$. \vspace{2pt}
        \item There exists $\varepsilon > 0$ such that:
        \begin{equation*}
            \inf_{y \in \R_{\geq 0}} \frac{\rho(y)}{y} = \varepsilon.
        \end{equation*}
        \item The function $\zeta : \R_{\geq 0} \to \R$, given by the expression:
        \begin{equation*}
        \inf_{y \in \R_{\geq 0}} ( \rho(y) - yd) = -\zeta(d),
        \end{equation*}
        is well defined, i.e., for every $d\in \R_{\ge 0}$ we have $\inf_{y \in \R_{\geq 0}} ( \rho(y) - yd)\in \R$.

    \end{enumerate}
\end{definition}

\begin{remark}
    Property 2 simply means that $\rho(y) \ge \varepsilon y$ for some $\varepsilon>0$, and property 3 simply requires that $\rho(y)-yd$ be lower bounded for any $d$.
\end{remark}

For example, the function $\rho(y) = \gamma_1 y + \gamma_2 y^2$ is a robustness function for $\gamma_1, \gamma_2 \in \R_{>0}$, since $\inf_{y \in \R_{\geq 0}} \frac{\rho(y)}{y} = \gamma_1$ and $\inf_{y \in \R_{\geq 0}} ( \rho(y) - yd) = \frac{-(\max \{0,  d-\gamma_1 \})^2}{4 \gamma_2}$. 
For a quicker reading of the paper, the reader may want to replace every usage of the class of functions in Definition \ref{def:rob_func} with $\rho(y)=\gamma_1 y+\gamma_2 y^2$.

\begin{remark}
    We note that property 3 above is closely related to the definition of the convex conjugate and Legendre–Fenchel transformation in convex analysis \cite{convex_book}. In fact $\zeta$ will be the convex conjugate of $\rho$ restricted to $\R_{\geq 0}$, and will be monotonically increasing in its argument.
\end{remark}

\begin{remark} \label{rem:delta_epsilon}
    Property 2 will also guarantee that $\zeta(d)=0$ for $d \leq \varepsilon$, since: \vspace{-5pt}
    \begin{align*}
        \inf_{y \in \R_{\geq 0}} ( \rho(y) - yd) &= \inf_{y \in \R_{\geq 0}} \left[ \left( \frac{\rho(y)}{y} - d\right) y \right] \\
        &\geq \inf_{y \in \R_{\geq 0}} [(\varepsilon - d)y] = 0. \vspace{-5pt}
    \end{align*}
    But $( \rho(y) - yd) |_{y=0} = 0$, and hence the lower bound provided by the preceding inequality is attained. This, combined with the fact that $\zeta$ is monotonically increasing will in fact mean that $\zeta(d) \geq 0$ for all $d \in \R_{\geq 0}$.
\end{remark}

Based on this characterization, we next define the notion of a Robust Control Barrier Function (R-CBF). 
\begin{definition}
    (\textit{Robust Control Barrier Function (R-CBF)}) A continuously differentiable function $h: \R^n \to \R$, is called a Robust Control Barrier Function (R-CBF) with robustness function $\rho : \R_{\geq 0} \to \R_{\geq 0}$, if there exists $\alpha \in \Kie$ such that: \vspace{-5pt}
    \begin{equation}
        \sup_{u \in \U} L_fh(x)+L_gh(x)u+\alpha(h(x)) \geq \rho(\Vert L_gh(x) \Vert),
        \label{eq:rcbf_def}
    \end{equation} \vspace{-5pt}
    for all $x \in \R^n$.
    \label{def:rcbf}
\end{definition}

We observe that for the case when $\U = \R^m$, condition (\ref{eq:rcbf_def}) reduces to:
\begin{equation*}
    L_gh(x)=0 \implies L_fh(x)+\alpha(h(x)) \geq 0,
\end{equation*}
due to property 1 of robustness functions in Definition \ref{def:rob_func}, thus proving that in the case of unconstrained input, any CBF will also be an R-CBF. As we shall see later, this will allow the enforcement of a stronger constraint, which provides more robust guarantees than the constraint imposed by regular CBFs without the need to construct a new function $h$.

A controller $k: \R^n \to \R^m$ is said to satisfy an R-CBF inequality if:
\begin{equation}
    L_fh(x)+L_gh(x)k(x)+\alpha(h(x)) \geq \rho(\Vert L_gh(x) \Vert),
    \label{eq:rcbf_contr}
\end{equation}
for all $x \in \R^n$, for some $\alpha\in \Kie$ and a robustness function $\rho$.

It is also worth noting that the existence of an R-CBF necessarily implies the existence of a continuous (and in fact, locally Lipschitz continuous) controller \mbox{$k : \R^n \to \R^m$} satisfying (\ref{eq:rcbf_contr}), as stated in the following Proposition.

\begin{proposition}
    Given any locally Lipschitz continuous controller \mbox{$k_d : \R^n \to \R^m$}, and an R-CBF $h$, it is possible to construct a minimally invasive locally Lipschitz continuous controller \mbox{$k : \R^n \to \R^m$}, satisfying (\ref{eq:rcbf_contr}) using the following Quadratic Program (QP):
    \begin{alignat}{2}
    k(x) =\; & \argmin _{u \in \U} \quad \Vert u - k_d(x) \Vert ^2 && \label{eq:Robust_QP}\\
     \quad  & \text{s.t.} \quad L_fh(x)+L_gh(x)u+\alpha(&&h(x)) \geq \nonumber\\&\qquad &&\rho(\Vert L_gh(x) \Vert).\nonumber
\end{alignat}
\end{proposition}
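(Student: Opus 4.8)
The plan is to prove the statement in two stages: first that the QP (\ref{eq:Robust_QP}) is feasible with a unique minimiser at every $x\in\R^n$, so $k$ is well defined and minimally invasive essentially by construction; and second that the map $x\mapsto k(x)$ is locally Lipschitz. Throughout I assume $\U$ is closed and convex (for instance compact, or $\U=\R^m$), as is standard for CBF-QPs, and that $\nabla h$ and $\rho$ are locally Lipschitz so that all the data of the QP are locally Lipschitz in $x$; without the latter assumptions the argument still yields continuity of $k$ but not local Lipschitzness.

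\emph{Feasibility and uniqueness.} Fix $x$ and rewrite the constraint as $L_gh(x)\,u\ge\rho(\|L_gh(x)\|)-L_fh(x)-\alpha(h(x))$. The R-CBF inequality (\ref{eq:rcbf_def}) states precisely that the supremum over $u\in\U$ of the left-hand side is at least the right-hand side, so the feasible set $K(x)=\{u\in\U : L_fh(x)+L_gh(x)u+\alpha(h(x))\ge\rho(\|L_gh(x)\|)\}$ is non-empty; it is also closed and convex. (When $L_gh(x)=0$ the constraint reduces to $L_fh(x)+\alpha(h(x))\ge\rho(0)=0$, which is implied by (\ref{eq:rcbf_def}) and $\rho(0)=0$, and it is independent of $u$, so $K(x)=\U$.) Since $u\mapsto\|u-k_d(x)\|^2$ is strictly convex and coercive, it attains a unique minimiser $k(x)$ over $K(x)$, namely the metric projection of $k_d(x)$ onto $K(x)$. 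By construction $k(x)$ is the admissible input closest to $k_d(x)$ satisfying (\ref{eq:rcbf_contr}), and $k(x)=k_d(x)$ wherever $k_d(x)$ is already feasible --- the minimally invasive property.

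\emph{Local Lipschitz continuity.} Partition $\R^n$ into the open set $\mathcal{A}=\{x : L_fh(x)+L_gh(x)k_d(x)+\alpha(h(x))<\rho(\|L_gh(x)\|)\}$, where $k_d$ violates the constraint, and its closed complement, on which $k(x)=k_d(x)$ is locally Lipschitz. On $\mathcal{A}$ the barrier constraint is active at optimality, and the structural fact that drives everything is that $L_gh(x)\neq 0$ on $\mathcal{A}$: otherwise the inequality defining $\mathcal{A}$ would read $L_fh(x)+\alpha(h(x))<0$, contradicting (\ref{eq:rcbf_def}). Hence the single active constraint satisfies LICQ on $\mathcal{A}$, and either directly from the KKT stationarity condition or by invoking Lipschitz stability of strictly convex QP solution maps under LICQ, $k$ is locally Lipschitz on $\mathcal{A}$; in the unconstrained case this is the explicit formula $k(x)=k_d(x)+\lambda(x)\,L_gh(x)^\top$ with $\lambda(x)=\big(\rho(\|L_gh(x)\|)-L_fh(x)-L_gh(x)k_d(x)-\alpha(h(x))\big)/\|L_gh(x)\|^2>0$, a quotient of locally Lipschitz functions whose denominator is locally bounded away from zero. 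The numerator of $\lambda$ vanishes on $\partial\mathcal{A}$, so this formula agrees with $k_d$ at the interface; and provided $\partial\mathcal{A}$ avoids $\{x:L_gh(x)=0\}$ the formula is in fact locally Lipschitz on a whole neighbourhood of each boundary point, so a routine gluing argument delivers local Lipschitz continuity of $k$ on all of $\R^n$.

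\emph{Main obstacle.} The one step needing real care is exactly that last proviso --- the behaviour near $\{x:L_gh(x)=0\}$, where the closed-form control is nominally $0/0$. I would establish that no such point lies in $\overline{\mathcal{A}}$: if $x_j\to x_0$ with $x_j\in\mathcal{A}$ and $L_gh(x_0)=0$, then passing to the limit forces $L_fh(x_0)+\alpha(h(x_0))\le 0$, which against (\ref{eq:rcbf_def}) is possible only with equality --- a borderline configuration ruled out under the operating hypotheses (automatically when $h$ is in addition a CBF, for which the inequality at $L_gh=0$ is strict, or when $L_gh$ is assumed non-vanishing near $\partial S$). With that in hand the constraint is locally slack near such points, $k=k_d$ there, and all the pieces fit; apart from this the proof is bookkeeping.
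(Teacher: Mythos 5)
The paper offers no proof of this proposition at all --- it only remarks that Lipschitz continuity ``can be verified in a manner similar to'' Jankovic (2018) --- so your worked-out argument (feasibility and uniqueness via projection onto a nonempty closed convex feasible set, the explicit half-space projection formula when $\U=\R^m$, and a case analysis around the critical set $\{x: L_gh(x)=0\}$) is the standard route that the cited reference takes, spelled out in more detail than the paper gives. Your added hypotheses ($\U$ closed and convex, the QP data --- including $\rho$ and $\alpha\circ h$ --- locally Lipschitz) are genuinely needed and are tacitly assumed by the paper as well.

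The gap is precisely the point you flag at the end, and under the stated hypotheses it cannot be closed: Definition \ref{def:rcbf} only requires the \emph{non-strict} inequality (\ref{eq:rcbf_def}), so the borderline configuration $L_gh(x_0)=0$ with $L_fh(x_0)+\alpha(h(x_0))=0$ on the closure of your active region $\mathcal{A}$ is not excluded, and when it occurs the QP controller can fail to be even continuous. Concretely, take $n=m=1$, $f(x)=-(x+1)$, $g(x)=x$, $h(x)=x+1$, $\alpha=\mathrm{id}$, $\U=\R$, $\rho(y)=\gamma_1 y+\gamma_2 y^2$: then $L_fh(x)+\alpha(h(x))\equiv 0$ and $L_gh(x)=x$, so $h$ is an R-CBF (and Assumption \ref{ass:regular_value} holds), yet it is not a CBF, and the minimum-norm QP solution with $k_d\equiv 0$ is $k(x)=\mathrm{sign}(x)\bigl(\gamma_1+\gamma_2\vert x\vert\bigr)$ for $x\neq 0$ while $k(0)=0$, a jump at the origin. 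Hence ``ruled out under the operating hypotheses'' is not available: the statement is only true if one adds the strictness a CBF supplies where $L_gh$ vanishes (i.e.\ $L_fh(x)+\alpha(h(x))>0$ whenever $L_gh(x)=0$), which is the situation the paper implicitly has in mind --- its R-CBFs arise from CBFs when $\U=\R^m$ --- but is not implied by the R-CBF definition alone. With that strengthened hypothesis your limiting argument does give $\{L_gh=0\}\cap\overline{\mathcal{A}}=\emptyset$ and the rest of your gluing argument goes through; without it, the step fails because the claim itself does. (The remaining caveats --- attainment of the supremum in (\ref{eq:rcbf_def}) for unbounded $\U\neq\R^m$, and Lipschitz stability of the projection for non-polyhedral $\U$ --- you have already hedged appropriately.)
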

The Lipschitz continuity of the aforementioned controller can be verified in a manner similar to that of \cite{jankovic2018robust}.

To present our result, for the remainder of this paper, we make the following assumption on $h$, which will be crucial when concluding forward invariance of inflated safe sets.
\begin{assumption} \label{ass:regular_value}
Every non-positive real number is a regular value of $h$, i.e., $\nabla_x h(x) \neq 0$ for all $h(x) \leq 0$.
\end{assumption}

When a state estimate is used with a control law in a closed loop setting, an error in the state estimate will manifest itself as a deviation of the control input from the nominal value, i.e., an actuation error. Thus, in section \ref{sct:act_err} we first show how robustness can be achieved against actuation errors and then in section \ref{sct:meas_err}, we extend this to deal with errors in the state estimate.

\subsection{Actuation Errors} \label{sct:act_err}

We first prove the following intermediate result which extends the ideas of Input-to-State Safety (ISSf) for actuation uncertainty.
\begin{theorem}
\label{thm:actuation_err}
Let $k : \R^n \to \R^m$ be a locally Lipschitz continuous controller that satisfies (\ref{eq:rcbf_contr}) 
and let the closed-loop system be defined by (\ref{eq:sys}) with $u(t)=k(x(t))+d(t)$, where the signal $d:\R_{\ge 0}\to \mathcal{U}$ satisfies $\Vert d(t)\vert\le \overline{d}$ for all $t\in \R_{\ge 0}$. Then, under Assumption \ref{ass:regular_value}, there exists an $\varepsilon \in \R_{>0}$ such that:
\begin{enumerate}
    \item if $\overline{d} \leq \varepsilon$, the set $S = \{ x \in \R^n : h(x) \geq 0 \}$ is asymptotically stable and thus also forward invariant.
    \item if $\overline{d} > \varepsilon$
    there exists $\xi \in \R_{>0}$ such that the inflated safe set $S_\xi = \{ x \in \R^n : h(x) \geq -\xi \}$ is asymptotically stable and thus also forward invariant.
\end{enumerate}
\end{theorem}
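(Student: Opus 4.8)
The plan is to apply Lemma \ref{lem:h_invariance} to a suitably chosen function $\tilde h$, after first showing that the closed-loop vector field with actuation error satisfies a differential inequality of the form $\dot h(x) + \alpha(h(x)) \geq -\zeta(\overline d)$, where $\zeta$ is the conjugate-type function associated with $\rho$ as in Definition \ref{def:rob_func}. First I would compute, along solutions of $\dot x = f(x) + g(x)(k(x) + d(t))$,
\begin{equation*}
    \dot h(x) = L_fh(x) + L_gh(x)k(x) + L_gh(x)d(t).
\end{equation*}
Using the controller inequality (\ref{eq:rcbf_contr}) to bound the first two terms from below by $\rho(\Vert L_gh(x)\Vert) - \alpha(h(x))$, and Cauchy--Schwarz to bound $L_gh(x)d(t) \geq -\Vert L_gh(x)\Vert\,\overline d$, I obtain
\begin{equation*}
    \dot h(x) + \alpha(h(x)) \geq \rho(\Vert L_gh(x)\Vert) - \Vert L_gh(x)\Vert\,\overline d \geq \inf_{y \geq 0}(\rho(y) - y\overline d) = -\zeta(\overline d).
\end{equation*}
This is the key estimate; it holds for all $x \in \R^n$ regardless of the magnitude of $\overline d$, which is exactly where the boundedness-for-free property comes from.

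Next I would convert this into the hypothesis of Lemma \ref{lem:h_invariance}. Pick any $\alpha' \in \Kie$ with $\alpha'(r) \leq \alpha(r)$ for $r \geq 0$ (e.g. a linear underapproximation), and choose $\xi \in \R_{\geq 0}$ large enough that $\alpha'(-\xi) \leq -\zeta(\overline d)$, i.e. $\xi \geq (\alpha')^{-1}(\zeta(\overline d))$ in absolute value; such $\xi$ exists because $\alpha' \in \Kie$ is onto. Set $\tilde h(x) = h(x) + \xi$, so $\tilde S_\xi = \{h(x) \geq -\xi\} = S_\xi$. Then for $x$ with $\tilde h(x) \geq 0$ one has $h(x) \geq -\xi$, hence $\alpha'(h(x)) \geq \alpha'(-\xi) \geq -\zeta(\overline d) \geq \ldots$; more cleanly, define $\hat\alpha(r) = \alpha'(r - \xi) + \zeta(\overline d)$, check $\hat\alpha \in \Kie$ (a shift plus a constant of a $\Kie$ function, using $\hat\alpha(0) = \alpha'(-\xi) + \zeta(\overline d) \leq 0$ — one may need to enlarge $\xi$ slightly so $\hat\alpha(0) = 0$, or simply use that $\dot{\tilde h} + \hat\alpha(\tilde h) \geq \dot h + \alpha'(h) + \zeta(\overline d) \geq 0$ holds with a genuine $\Kie$ function after the shift). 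Assumption \ref{ass:regular_value} guarantees $0$ is a regular value of $\tilde h$ (equivalently $-\xi$ is a regular value of $h$), which is required to invoke Lemma \ref{lem:h_invariance}. Applying the Lemma to $\tilde h$ gives forward invariance and asymptotic stability of $S_\xi$.

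For part (1), take $\varepsilon$ to be the constant from property 2 of Definition \ref{def:rob_func} (the same $\varepsilon$ appearing in the theorem statement). By Remark \ref{rem:delta_epsilon}, $\overline d \leq \varepsilon$ forces $\zeta(\overline d) = 0$, so the estimate above reads $\dot h(x) + \alpha(h(x)) \geq 0$ directly, and Lemma \ref{lem:h_invariance} applied to $h$ itself (with its regular value $0$, from the standing assumption on $S$) yields asymptotic stability of $S$. For part (2), when $\overline d > \varepsilon$ we have $\zeta(\overline d) > 0$ in general (or at least $\zeta(\overline d) \geq 0$), and the construction of $\xi$ above produces the desired inflated set. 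I expect the main obstacle to be the bookkeeping in part (2): exhibiting an \emph{honest} $\Kie$ function for the shifted problem so that Lemma \ref{lem:h_invariance} applies verbatim — in particular handling the case $\zeta(\overline d) = 0$ (where one can take $\xi = 0$) versus $\zeta(\overline d) > 0$, and making sure the chosen $\hat\alpha$ both vanishes at $0$ and stays below the bound $-\zeta(\overline d)$ at $-\xi$; passing to a linear underapproximation of $\alpha$ early on makes these manipulations routine. A secondary point to state carefully is forward completeness of the closed-loop solutions, which is needed to even speak of forward invariance and is anyway assumed in the paper's conventions; boundedness of $h$ along trajectories (hence of trajectories, if $S_\xi$ is bounded or if $h$ is proper) follows from the differential inequality and can be noted in passing.
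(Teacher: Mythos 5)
Your skeleton coincides with the paper's: the key estimate $\dot h(x,u)+\alpha(h(x)) \ge \rho(\Vert L_gh(x)\Vert)-\Vert L_gh(x)\Vert\,\overline{d} \ge -\zeta(\overline{d})$ from (\ref{eq:rcbf_contr}) and property 3 of Definition \ref{def:rob_func}, part (1) from $\zeta(\overline{d})=0$ when $\overline{d}\le\varepsilon$ (Remark \ref{rem:delta_epsilon}, equivalent to the paper's direct use of property 2), and part (2) by shifting $h$ by a constant and invoking Lemma \ref{lem:h_invariance}, with Assumption \ref{ass:regular_value} making the shifted zero level set regular. Part (1) is correct as written.

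The execution of part (2), however, contains a wrong-direction step. Replacing $\alpha$ by an underapproximation $\alpha'\le\alpha$ and asserting $\dot h+\alpha'(h)+\zeta(\overline{d})\ge 0$ does not follow from $\dot h+\alpha(h)+\zeta(\overline{d})\ge 0$: the inequality is preserved only if $\alpha'\ge\alpha$ (enlarging the comparison function weakens the condition, shrinking it strengthens it), and it genuinely fails at points where $h(x)>0$ and $\alpha'(h(x))<\alpha(h(x))$, which Lemma \ref{lem:h_invariance} does not tolerate since it requires the inequality on all of $\R^n$. Two side remarks in the same passage are also off: a linear underapproximation of a general $\Kie$ function on $\R_{\ge 0}$ need not exist (e.g.\ $\alpha(r)=\sqrt{r}$ for $r\ge0$), and \emph{enlarging} $\xi$ pushes $\hat\alpha(0)=\alpha'(-\xi)+\zeta(\overline{d})$ further below zero rather than up to it. The repair is simply to use no approximation at all: take $\xi=-\alpha^{-1}(-\zeta(\overline{d}))$, which is strictly positive because $\overline{d}>\varepsilon$ implies the existence of $y^*>0$ with $\rho(y^*)-y^*\overline{d}<0$ and hence $\zeta(\overline{d})>0$ (so your worry about the case $\zeta(\overline{d})=0$ is moot), set $h_\zeta(x)=h(x)-\alpha^{-1}(-\zeta(\overline{d}))$ and $\tilde\alpha(r)=\alpha\bigl(r+\alpha^{-1}(-\zeta(\overline{d}))\bigr)+\zeta(\overline{d})$; then $\tilde\alpha(0)=0$, $\tilde\alpha\in\Kie$, and $\dot h_\zeta+\tilde\alpha(h_\zeta)=\dot h+\alpha(h)+\zeta(\overline{d})\ge 0$ holds exactly, so Lemma \ref{lem:h_invariance} applies verbatim. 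This is your own $\hat\alpha$ template with $\alpha'=\alpha$, and it is precisely the paper's construction.
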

\begin{proof}
    Let us use the notation $\dot{h}(x,u)$ to denote $\dot{h}(x,u) = L_fh(x) + L_gh(x)u$. Now, we evaluate $\dot{h}(x,u)$ with $u=k(x)+d$:
    \begin{align}
        \dot{h}(x,u) &= L_fh(x)+L_gh(x)(k(x)+d) \nonumber\\
        & \geq -\alpha(h(x)) + \rho(\Vert L_gh(x)\Vert) - \Vert L_gh(x) \Vert \overline{d}. \label{eq:h_dot_proof_1}
    \end{align}
    When $L_gh(x)=0$, we have $\dot{h}(x,u) + \alpha(h(x))>0$. For the other case, we rearrange the terms to obtain:
    \begin{equation*}
        \dot{h}(x,u) + \alpha(h(x)) \geq \Vert L_gh(x) \Vert \left( \frac{\rho(\Vert L_gh(x)\Vert)}{\Vert L_gh(x) \Vert} - \overline{d}\right),
    \end{equation*}
    Now, by property 2 of Definition \ref{def:rob_func}, there exists $\varepsilon \in \R_{>0}$ such that:
    \begin{equation*}
        \dot{h}(x,u) + \alpha(h(x)) \geq \Vert L_gh(x) \Vert (\varepsilon - \overline{d}).
    \end{equation*}
    When $\overline{d} \leq \varepsilon$, we have $\dot{h}(x,u) + \alpha(h(x)) \geq0$ for all $x \in \R^n$, and by Lemma \ref{lem:h_invariance}, the set $S$ will be asymptotically stable and thus also forward invariant.
    
    To obtain the second case, we note that (\ref{eq:h_dot_proof_1}) implies:
    \begin{align*}
        \dot{h}(x,u) + \alpha(h(x)) &\geq  \inf_{x \in \R^n} ( \rho(\Vert L_gh(x)\Vert) - \Vert L_gh(x)\Vert \overline{d})\\
        &\geq -\zeta(\overline{d}),
    \end{align*}
    where the last inequality follows from property 3 of Definition \ref{def:rob_func}. Now defining $h_\zeta(x) = h(x)-\alpha^{-1}(-\zeta(\overline{d}))$, we obtain:
    \begin{equation*}
        \dot{h}_\zeta(x,u)+\tilde{\alpha}(h_\zeta(x)) \geq 0,
    \end{equation*}
    where $\tilde{\alpha}(h_\zeta(x)) =\alpha(h_\zeta(x) + \alpha^{-1}(-\zeta(\overline{d})))+\zeta(\overline{d})$, and $\tilde{\alpha} \in \Kie$ \cite{compendium_class_k}. Now, under Assumption \ref{ass:regular_value} by Lemma \ref{lem:h_invariance}, the set $S_\xi = \{ x \in \R^n : h(x) \geq \alpha^{-1}(-\zeta(\overline{d}))\}$ is asymptotically stable and thus also forward invariant.
\end{proof}

We observe here that property 2 of Definition \ref{def:rob_func} yields the first result of Theorem \ref{thm:actuation_err}: invariance of the original safe set under small actuation errors, while property 3, on the other hand, yields the second result: invariance of an inflated set under larger actuation errors.

Thus for the case of unconstrained input, if one possesses a CBF for a system, one may employ a QP as in (\ref{eq:Robust_QP}) to construct a controller that satisfies the stronger R-CBF condition. Since any CBF will also be an R-CBF, the feasibility of this QP is always guaranteed. It is important to note however that the controllers generated by the two constraints are different, since the R-CBF inequality imposes a stronger constraint.

\begin{remark}
    A popular modification for robustness against actuation uncertainty is to use an ISSf CBF. This also yields a strengthened version of the standard CBF inequality, where the modification term will only satisfy properties 1 and 3 of Definition \ref{def:rob_func}. Thus, in general, the ISSf inequality will only guarantee invariance of an inflated set and not of the original safe set $S$.
    \label{rem:issf-cbf}
\end{remark}

\subsection{Measurement Errors} \label{sct:meas_err}

Now we show how this analysis may be extended to provide guarantees against measurement uncertainty. At the technical level, we leverage the compactness of the super-level sets of $h$ and the continuity of the controller to convert measurement uncertainty into actuation errors.

\begin{theorem}
Let $k : \R^n \to \R^m$ be a continuous controller that satisfies (\ref{eq:rcbf_contr})  
and let the closed-loop system be defined by (\ref{eq:sys}) with $u=k(\hat{x})$, where $\hat{x}$ satisfies (\ref{eq:norm}).
Then, under Assumption \ref{ass:regular_value}, for any $S_\beta \supseteq S$ that is a compact super-level set of $h$, 
there exists $\epsilon_1, \epsilon_2 \in \R_{>0}$, with $\epsilon_1<\epsilon_2$ such that:
\begin{enumerate}
    \item if $\delta \leq \epsilon_1$, the set $S = \{ x \in \R^n : h(x) \geq 0 \}$ is asymptotically stable with a region of attraction containing $S_\beta$, and thus also forward invariant.
    \item if $\epsilon_1 < \delta \leq \epsilon_2$, there exists $\xi \in \R_{>0}$ such that the inflated safe set $S_\xi = \{ x \in \R^n : h(x) \geq -\xi \} \subseteq S_\beta$ is asymptotically stable with a region of attraction containing $S_\beta$, and thus also forward invariant.
\end{enumerate}
\label{thm:main}
\end{theorem}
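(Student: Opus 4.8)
The plan is to exploit the compactness of $S_\beta$ and the continuity of $k$ to convert the state-estimation error into a uniformly bounded actuation error, then to establish forward invariance of $S_\beta$, and finally to apply the argument of Theorem~\ref{thm:actuation_err} along every trajectory starting in $S_\beta$. Write the given compact super-level set as $S_\beta=\{x\in\R^n:h(x)\ge\ell\}$ with $\ell\le0$; by Assumption~\ref{ass:regular_value}, $\ell$ is a regular value of $h$. Fix $\bar\delta>0$ and put $K:=B_{\bar\delta}(S_\beta)$, which is compact, so $k$ is uniformly continuous on $K$; let $\omega:\R_{\ge0}\to\R_{\ge0}$ be a nondecreasing modulus of continuity of $k$ on $K$, with $\omega(s)\to0$ as $s\to0^+$ and $\Vert k(y)-k(z)\Vert\le\omega(\Vert y-z\Vert)$ for all $y,z\in K$. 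Let $\varepsilon$ and $\zeta$ be as in properties~2 and~3 of Definition~\ref{def:rob_func}, and note $-\alpha(\ell)\ge0$ because $\ell\le0$ and $\alpha\in\Kie$. Finally choose $\epsilon_1\in(0,\bar\delta/2]$ with $\omega(\epsilon_1)\le\varepsilon$ and $\epsilon_2\in(\epsilon_1,\bar\delta/2]$ with $\zeta(\omega(\epsilon_2))\le-\alpha(\ell)$; this is possible because $\omega$ and $\zeta$ are continuous and vanish at $0$, and, in the nondegenerate case where $S_\beta$ strictly contains $S$, because $-\alpha(\ell)>0$ (only case~1 of the theorem is nonvacuous when $S_\beta=S$).

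First I would show $S_\beta$ is forward invariant for every $\delta\le\epsilon_2$. For a solution $x(\cdot,x_0)$ with $x_0\in S_\beta$, as long as $x(t)\in B_{\bar\delta/2}(S_\beta)$ we have $\hat x(t)\in B_\delta(x(t))\subseteq K$ by~(\ref{eq:norm}), so $\Vert k(\hat x(t))-k(x(t))\Vert\le\omega(\delta)$; writing $u(t)=k(\hat x(t))=k(x(t))+d(t)$ with $\Vert d(t)\Vert\le\omega(\delta)$ and repeating the estimate leading to~(\ref{eq:h_dot_proof_1}) together with property~3 of $\rho$ shows that $w(t):=h(x(t,x_0))$ obeys $\dot w(t)\ge-\alpha(w(t))-\zeta(\omega(\delta))$. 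By the scalar comparison principle, $w(t)\ge v(t)$ where $\dot v=-\alpha(v)-\zeta(\omega(\delta))$ and $v(0)=h(x_0)\ge\ell$; since $-\alpha(\ell)-\zeta(\omega(\delta))\ge0$ by the choice of $\epsilon_2$, the interval $[\ell,\infty)$ is invariant under this scalar equation, so $w(t)\ge\ell$, i.e.\ $x(t)\in S_\beta$, for as long as $x(t)\in B_{\bar\delta/2}(S_\beta)$. A standard first-exit-time argument then shows $x(t)$ can leave neither $B_{\bar\delta/2}(S_\beta)$ nor $S_\beta$, so $S_\beta$ is forward invariant.

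Next, fix any $x_0\in S_\beta$. By forward invariance the whole trajectory stays in $S_\beta\subseteq K$, so $\hat x(t)\in K$ for all $t\ge0$, the actuation error $d(t):=k(\hat x(t))-k(x(t))$ satisfies $\Vert d(t)\Vert\le\omega(\delta)$ for all $t$, and $u(t)=k(x(t))+d(t)=k(\hat x(t))\in\U$. Hence $x(\cdot,x_0)$ is a trajectory of the closed-loop system in Theorem~\ref{thm:actuation_err} with actuation bound $\overline{d}=\omega(\delta)$, and the estimates in that proof hold along it. If $\delta\le\epsilon_1$, then $\omega(\delta)\le\varepsilon$ gives $\dot w(t)+\alpha(w(t))\ge0$, hence $\mathbf{d}(x(t,x_0),S)\le\beta(\mathbf{d}(x_0,S),t)$ for some $\beta\in\KL$ not depending on $x_0$; since $x_0\in S_\beta$ was arbitrary, $S$ is asymptotically stable with a region of attraction containing $S_\beta$. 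If $\epsilon_1<\delta\le\epsilon_2$, set $\xi:=-\alpha^{-1}(-\zeta(\omega(\delta)))>0$; the choice of $\epsilon_2$ gives $-\xi=\alpha^{-1}(-\zeta(\omega(\delta)))\ge\ell$, so $S_\xi=\{x\in\R^n:h(x)\ge-\xi\}\subseteq S_\beta$, and working with $h_\zeta(x)=h(x)-\alpha^{-1}(-\zeta(\omega(\delta)))$ and the reshaped $\tilde\alpha\in\Kie$ exactly as in the proof of Theorem~\ref{thm:actuation_err}, while invoking Assumption~\ref{ass:regular_value} so that $-\xi$ is a regular value of $h$, yields $\mathbf{d}(x(t,x_0),S_\xi)\le\beta(\mathbf{d}(x_0,S_\xi),t)$; thus $S_\xi$ is asymptotically stable with a region of attraction containing $S_\beta$.

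The crux is the apparent circularity between the last two steps: the uniform actuation bound $\Vert d(t)\Vert\le\omega(\delta)$ is only justified while the state lies in $S_\beta$, yet this bound is exactly what keeps the state in $S_\beta$. Resolving it is precisely the job of the self-contained first-exit-time/comparison argument of the second step, which invokes the differential inequality only on the slightly enlarged set $B_{\bar\delta/2}(S_\beta)$ on which it is a priori valid. The remaining points are bookkeeping: choosing $\epsilon_1<\epsilon_2$ so that the inflated set $S_\xi$ delivered by Theorem~\ref{thm:actuation_err} is guaranteed to lie inside the prescribed $S_\beta$ (using monotonicity of $\zeta$ and, in the nondegenerate case, $-\alpha(\ell)>0$), and using Assumption~\ref{ass:regular_value} so that the levels $\ell$ and $-\xi$ of $h$ are regular values and the $\KL$ distance estimates underlying Lemma~\ref{lem:h_invariance} are available.
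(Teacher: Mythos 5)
Your proposal is correct and follows essentially the same route as the paper: use continuity of $k$ on a compact set to bound $\Vert k(\hat{x})-k(x)\Vert$ by a function of $\delta$, establish forward invariance of $S_\beta$ for $\delta\le\epsilon_2$, and then invoke Theorem~\ref{thm:actuation_err} inside $S_\beta$ with the resulting actuation bound. The only differences are cosmetic: you prove invariance of $S_\beta$ by a comparison/first-exit argument on an enlarged neighborhood instead of evaluating $\dot h$ on $\partial S_\beta$ as the paper does, and to guarantee $\xi>0$ in case~2 you should take $\epsilon_1$ maximal (or upper-bound your modulus $\omega$ by a class $\mathcal{K}$ function, as the paper does with $\sigma_\beta$).
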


\begin{proof}
We first show that given any compact $S_\beta$, there exists $\epsilon_2 \in \R_{>0}$ such that $S_\beta$ is forward invariant when $\delta \leq \epsilon_2$.

Consider $\dot{h}(x, u)$ with $u = k(\hat{x})$:
\begin{align}
    \dot{h}(x, k(\hat{x})) &= L_fh(x)+L_gh(x)k(\hat{x}) \nonumber \\
        &= L_fh(x)+L_gh(x)k(x)\nonumber
        \\& \qquad
        +L_gh(x)(k(\hat{x}) - k(x)) \nonumber\\
        &\geq \rho(\Vert L_gh(x) \Vert ) - \alpha(h(x)) \nonumber
        \\&\qquad - \Vert L_gh(x) \Vert \Vert k(\hat{x}) - k(x) \Vert, \label{eq:h_dot_bound}
\end{align}
where the first inequality follows from (\ref{eq:rcbf_contr}) and properties of inner products.

Next, we define the function $\tilde{\sigma}_\beta : \R_{\geq0} \to \R_{\geq0}$ as:
\begin{equation}
    \tilde{\sigma}_\beta (r) = \sup_{\Vert e \Vert \leq r} \left( \sup_{z \in S_\beta} \Vert k(z+e)-k(z) \Vert \right).
    \label{eq:sigma_beta_sup}
\end{equation}
We note that since $k$ is a continuous function, the supremum over $(x,e)$ in the compact set $S_\beta \times B_r(0)$ is well defined. Since $\tilde{\sigma}_\beta$ is a non-decreasing function with the property $\tilde{\sigma}_\beta(0)=0$, it can be upper bounded by a class $\K$ function $\sigma_\beta : \R_{\geq0} \to \R_{\geq0}$ \cite{compendium_class_k}.

Now, noting that $\hat{x} = x+e$ where $\Vert e \Vert \leq \delta$, we observe that:
\begin{equation}
    \Vert k(\hat{x})-k(x) \Vert \leq \tilde{\sigma}_\beta(\delta) \leq \sigma_\beta(\delta).
    \label{eq:sigma_beta}
\end{equation}
We substitute this in (\ref{eq:h_dot_bound}) to obtain:
\begin{align}
    \dot{h}(x, k(\hat{x})) &\geq \rho(\Vert L_gh(x) \Vert ) - \sigma_\beta(\delta) \Vert L_gh(x) \Vert - \alpha(h(x)) \nonumber\\
    &\geq -\zeta(\sigma_\beta(\delta))-\alpha(h(x)), \label{eq:h_dot_alpha}
\end{align}
where the second inequality follows from property 3 of Definition \ref{def:rob_func}.
Evaluating on $x \in \partial S_\beta$, we have:
\begin{equation*}
    \dot{h}(x, k(\hat{x}))  \big|_{x \in \partial S_\beta} \geq -\zeta(\sigma_\beta(\delta))-\alpha(-\beta),
\end{equation*}
meaning:
\begin{equation}
    \zeta(\sigma_\beta(\delta)) \leq -\alpha(-\beta) \implies \dot{h}(x, k(\hat{x}))  \big|_{x \in \partial S_\beta} \geq 0.
\end{equation}
Now, we observe that since $\zeta$ is monotonically increasing, and $\sigma_\beta \in \K$, the antecedent of the implication is equivalent to $\delta \leq \epsilon_2$, where:
\begin{equation*}
    \epsilon_2 = \sup \{ \delta \in \R_{\geq 0} : \zeta(\sigma_\beta(\delta)) \leq -\alpha(-\beta) \},
\end{equation*}
yielding that $S_\beta$ is forward invariant for $\delta \leq \epsilon_2$.


Now, within this invariant set, we have: \vspace{-5pt}
\begin{equation*}
    k(\hat{x}) = k(x) + d(t), \vspace{-5pt}
\end{equation*} 
where $\Vert d(t) \Vert \leq \sigma_\beta(\delta)$ for all $t \in \R_{>0}$. Thus, we may invoke Theorem \ref{thm:actuation_err} within $S_\beta$, with $\overline{d} = \sigma_\beta(\delta)$ to conclude that when $\delta \leq \epsilon_1 =  \sigma_\beta^{-1} (\inf_{y \in \R_{\geq 0}} \frac{\rho(y)}{y})$, the set $S$ is forward invariant and asymptotically stable, and when $\epsilon_1 < \delta < \epsilon_2$, the set $S_\zeta = \{ x \in \R^n : h(x) \geq \alpha^{-1}(-\zeta(\sigma_\beta(\delta)))\}$ is forward invariant and asymptotically stable.
\end{proof}

As stated in the preceding theorem, a controller satisfying an R-CBF inequality (\ref{eq:rcbf_contr}), will guarantee that for small amounts of state uncertainty, the invariance and asymptotic stability of the original safe set is retained.
When the uncertainty grows beyond this threshold, we retain guarantees for an inflated safe set in an ISSf like manner.

\subsection{Extensions}

We now extend the results of Theorem \ref{thm:main} for the case of an unbounded safe set\footnote{In this case, the definition of asymptotic stability needs to be strengthened by requiring forward completeness of solutions.} given the existence of a Lipschitz continuous controller satisfying (\ref{eq:rcbf_contr}), as stated in the following Corollary.

\begin{corollary}
Let $k : \R^n \to \R^m$ be a Lipschitz continuous controller, with Lipschitz constant $\mathcal{L}_k$, that satisfies (\ref{eq:rcbf_contr}).
Then, for the closed-loop system be defined by (\ref{eq:sys}) with $u=k(\hat{x})$, where $\hat{x}$ satisfies (\ref{eq:norm}):
\begin{enumerate}
    \item when $\delta \leq \frac{\varepsilon}{\mathcal{L}_k}$, the set $S = \{ x \in \R^n : h(x) \geq 0 \}$ is forward invariant and globally asymptotically stable.
    \item when $\delta > \frac{\varepsilon}{\mathcal{L}_k}$, the inflated set defined by $S' = \left\{ x \in \R^n : h(x) \geq \alpha^{-1}\left( -\zeta ( \mathcal{L}_k \delta ) \right) \right\}$ is forward invariant and globally asymptotically stable.
\end{enumerate}  
\label{crl:glob}
\end{corollary}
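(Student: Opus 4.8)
The plan is to mirror the proof of Theorem \ref{thm:main}, but to exploit global Lipschitz continuity of $k$ to replace the locally-defined comparison function $\sigma_\beta$ by the \emph{linear} bound $\Vert k(\hat x)-k(x)\Vert\le \mathcal{L}_k\Vert \hat x-x\Vert\le \mathcal{L}_k\delta$, valid on all of $\R^n$. First I would substitute $u=k(\hat x)$ into $\dot h(x,u)=L_fh(x)+L_gh(x)k(\hat x)$, add and subtract $L_gh(x)k(x)$, and use (\ref{eq:rcbf_contr}) together with Cauchy--Schwarz exactly as in (\ref{eq:h_dot_bound}), obtaining
\begin{equation*}
\dot h(x,k(\hat x))+\alpha(h(x))\ \ge\ \rho(\Vert L_gh(x)\Vert)-\mathcal{L}_k\delta\,\Vert L_gh(x)\Vert .
\end{equation*}
Because this bound now holds for \emph{every} $x\in\R^n$ (no need to first trap the trajectory in a compact sublevel set), I can feed it directly into the two dichotomous cases.

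For case 1, when $\Vert L_gh(x)\Vert\neq0$ I divide through and invoke property 2 of Definition \ref{def:rob_func} to get $\dot h+\alpha(h(x))\ge \Vert L_gh(x)\Vert(\varepsilon-\mathcal{L}_k\delta)$, which is $\ge0$ whenever $\delta\le\varepsilon/\mathcal{L}_k$; when $L_gh(x)=0$ the right-hand side of (\ref{eq:rcbf_contr}) vanishes by property 1 and $\dot h+\alpha(h(x))\ge0$ as well. Then Lemma \ref{lem:h_invariance} gives forward invariance and asymptotic stability of $S$ — and since the inequality $\dot h+\alpha(h(x))\ge0$ holds on all of $\R^n$, the region of attraction is all of $\R^n$, i.e. global asymptotic stability (with forward completeness as flagged in the footnote). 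For case 2, I apply property 3 to the displayed bound to get $\dot h(x,k(\hat x))+\alpha(h(x))\ge -\zeta(\mathcal{L}_k\delta)$ for all $x$, define $h_\zeta(x)=h(x)-\alpha^{-1}(-\zeta(\mathcal{L}_k\delta))$ and the associated $\tilde\alpha\in\Kie$ exactly as in the proof of Theorem \ref{thm:actuation_err}, so that $\dot h_\zeta+\tilde\alpha(h_\zeta)\ge0$ globally; Lemma \ref{lem:h_invariance} (using Assumption \ref{ass:regular_value} to ensure $0$ is a regular value of $h_\zeta$, since $\zeta\ge0$ by Remark \ref{rem:delta_epsilon} makes the level non-positive) then yields global asymptotic stability and forward invariance of $S'$.

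The step I expect to need the most care is justifying global asymptotic stability and forward completeness rather than merely local/regional statements: Lemma \ref{lem:h_invariance} as stated gives asymptotic stability of $\tilde S$ with an unspecified region of attraction, and for an unbounded safe set one must separately argue solutions do not escape in finite time. I would handle this by noting that the differential inequality $\dot h\ge -\alpha(h(x))$ (resp. $\dot h_\zeta\ge-\tilde\alpha(h_\zeta)$) holds along every solution on its maximal interval of existence, which by a comparison argument keeps $h(x(t))$ (resp. the distance to $S'$) bounded on bounded time intervals, precluding finite-time blow-up through the relevant boundary and delivering the forward-completeness needed for the strengthened notion of asymptotic stability referenced in the footnote; the $\KL$-estimate itself then follows from the comparison lemma applied to $\dot h_\zeta\ge-\tilde\alpha(h_\zeta)$ as in the standard CBF arguments of \cite{cbf_robustness}. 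Everything else is a direct specialization of the already-proved theorems with $\sigma_\beta(\delta)$ replaced by $\mathcal{L}_k\delta$ and $\varepsilon=\inf_{y\in\R_{\ge0}}\rho(y)/y$.
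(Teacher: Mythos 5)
Your main argument is exactly the intended one (the paper gives no explicit proof of Corollary \ref{crl:glob}): since $\Vert k(\hat x)-k(x)\Vert\le\mathcal{L}_k\delta$ holds on all of $\R^n$, the compact-trapping step of Theorem \ref{thm:main} involving $S_\beta$ and $\sigma_\beta$ becomes unnecessary, and the bound $\dot h+\alpha(h)\ge\rho(\Vert L_gh\Vert)-\mathcal{L}_k\delta\Vert L_gh\Vert$ feeds directly into properties 2 and 3 of Definition \ref{def:rob_func}, giving the two cases via Lemma \ref{lem:h_invariance} (with Assumption \ref{ass:regular_value} for the shifted level set), just as in Theorem \ref{thm:actuation_err}. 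One caveat: your final paragraph overstates what the comparison argument delivers. The inequality $\dot h\ge-\alpha(h(x))$ only bounds $h(x(t))$ from below; when $S$ is unbounded it does not preclude $\Vert x(t)\Vert\to\infty$ in finite time \emph{inside} the safe set, so it cannot by itself establish forward completeness. The paper sidesteps this by assuming forward completeness of solutions (footnote to Definition 1) and by folding it into the strengthened notion of asymptotic stability mentioned in the footnote to the corollary, rather than proving it; you should either make the same assumption explicit or impose additional structure (e.g.\ linear growth of the closed-loop vector field) if you want completeness as a conclusion. The rest of your argument, including the $\KL$ estimate and the regularity of the shifted level via $\zeta\ge0$, is sound.
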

Note however, that the existence of a Lipschitz continuous controller is not necessarily guaranteed by the existence of an R-CBF, and will depend on the nature of the system and choice of R-CBF.

We also note that robustness functions may be modified in a ``tunable'' manner similar to \cite{tissf}, to reduce conservatism when the trajectories of the system lie further inside the interior of the safe set.
The following corollary demonstrates this for the specific choice of robustness function, $\rho(y) = \gamma_1y + \gamma_2y^2$, where $\gamma_1, \gamma_2 \in \R_{>0}$.

\begin{corollary}
    Let $k:\R^n \to \R^m$ be a continuous controller that satisfies:
    \begin{align}
        L_f&h(x)+L_gh(x)k(x)+\alpha(h(x)) \\
        &\geq \frac{\gamma_1}{\varepsilon_1(h(x))} \Vert L_gh(x) \Vert + \frac{\gamma_2}{\varepsilon_2(h(x))} \Vert L_gh(x) \Vert^2, \nonumber
    \end{align}
    for all $x \in \R^n$, where $\varepsilon_1, \varepsilon_2 : \R \to \R$ are continuous functions that satisfy $\varepsilon_i(r)=1$ when $r\leq 0$, and $\frac{d \varepsilon_i}{dr} > 0$ when $r>0$, for $i=1,2$, and $h$ satisfies Assumption \ref{ass:regular_value}. Then the conclusions of Theorem \ref{thm:main} will hold for the closed loop system defined by (\ref{eq:sys}) with $u=k(\hat{x})$.
\end{corollary}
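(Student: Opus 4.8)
The plan is to reduce the statement to Theorem \ref{thm:main} by observing that the only effect of the tuning factors $1/\varepsilon_i$ is to \emph{weaken} the robustifying term in the interior $\{h>0\}$, where $\varepsilon_i>1$, while leaving it untouched on $\{h\le 0\}$, where $\varepsilon_i\equiv 1$. Concretely, fix any compact super-level set $S_\beta=\{x\in\R^n:h(x)\ge -\beta\}\supseteq S$ and set $M=\max_{x\in S_\beta}h(x)<\infty$. Each $\varepsilon_i$ is continuous and non-decreasing on $\R$ (equal to $1$ on $(-\infty,0]$, strictly increasing on $(0,\infty)$), so $1\le \varepsilon_i(h(x))\le \varepsilon_i(M)=:\bar{\varepsilon}_i\in[1,\infty)$ for every $x\in S_\beta$. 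Hence the hypothesis of the corollary yields, for all $x\in S_\beta$,
\[
L_fh(x)+L_gh(x)k(x)+\alpha(h(x))\ \ge\ \tilde{\rho}\bigl(\Vert L_gh(x)\Vert\bigr),\qquad \tilde{\rho}(y):=\tfrac{\gamma_1}{\bar{\varepsilon}_1}y+\tfrac{\gamma_2}{\bar{\varepsilon}_2}y^2 .
\]
Since $\gamma_1/\bar{\varepsilon}_1,\gamma_2/\bar{\varepsilon}_2\in\R_{>0}$, the function $\tilde{\rho}$ is itself a robustness function (cf.\ the example following Definition \ref{def:rob_func}), with $\inf_{y\ge 0}\tilde{\rho}(y)/y=\gamma_1/\bar{\varepsilon}_1$ and a well-defined conjugate $\tilde{\zeta}$ that, by Remark \ref{rem:delta_epsilon}, is non-negative and vanishes on $[0,\gamma_1/\bar{\varepsilon}_1]$. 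In other words, on $S_\beta$ the controller $k$ satisfies the ordinary R-CBF controller inequality (\ref{eq:rcbf_contr}) with robustness function $\tilde{\rho}$ — and on $\partial S_\beta\subseteq\{h\le 0\}$ it even satisfies it with $\rho(y)=\gamma_1 y+\gamma_2 y^2$.

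With this reduction in place I would rerun the proof of Theorem \ref{thm:main} verbatim, reading $\tilde{\rho}$ for $\rho$ and $\tilde{\zeta}$ for $\zeta$ throughout: build the class-$\K$ majorant $\sigma_\beta$ of the modulus of continuity (\ref{eq:sigma_beta_sup}) so that $\Vert k(\hat{x})-k(x)\Vert\le\sigma_\beta(\delta)$ on $S_\beta$; obtain $\dot{h}(x,k(\hat{x}))\ge -\tilde{\zeta}(\sigma_\beta(\delta))-\alpha(h(x))$ for $x\in S_\beta$; evaluate at $\partial S_\beta$ to conclude forward invariance of $S_\beta$ whenever $\delta\le\epsilon_2:=\sup\{\delta\in\R_{\ge 0}:\tilde{\zeta}(\sigma_\beta(\delta))\le -\alpha(-\beta)\}$; and then invoke Theorem \ref{thm:actuation_err} inside the now-invariant $S_\beta$ with $\overline{d}=\sigma_\beta(\delta)$ and robustness function $\tilde{\rho}$. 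This yields asymptotic stability — hence forward invariance — with region of attraction containing $S_\beta$, of $S$ when $\delta\le\epsilon_1:=\sigma_\beta^{-1}(\gamma_1/\bar{\varepsilon}_1)$, and of the inflated set $S_\xi=\{x\in\R^n:h(x)\ge\alpha^{-1}(-\tilde{\zeta}(\sigma_\beta(\delta)))\}$ when $\epsilon_1<\delta\le\epsilon_2$. The inclusions $S\subseteq S_\xi\subseteq S_\beta$ hold because $\tilde{\zeta}\ge 0$ and because $\delta\le\epsilon_2$ forces $\alpha^{-1}(-\tilde{\zeta}(\sigma_\beta(\delta)))\ge -\beta$; the strict inequality $\epsilon_1<\epsilon_2$ follows from $\tilde{\zeta}(\sigma_\beta(\epsilon_1))=0<-\alpha(-\beta)$ together with continuity of $\tilde{\zeta}\circ\sigma_\beta$; and Assumption \ref{ass:regular_value} guarantees, exactly as in Theorem \ref{thm:main}, that the relevant non-positive levels of $h$ are regular values, so Lemma \ref{lem:h_invariance} applies to the shifted functions.

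The main obstacle — essentially the only new ingredient — is the reduction step: noticing that the tuning factors never strengthen the robustifying term on $\{h\le 0\}$, and that on a compact super-level set $S_\beta$ the factors $1/\varepsilon_i(h(x))$ are bounded below by $1/\bar{\varepsilon}_i>0$, so the tunable inequality still implies a genuine (merely $\beta$-dependent) R-CBF inequality throughout $S_\beta$. Once that is in hand, the rest is a mechanical rerun of the proofs of Theorems \ref{thm:actuation_err} and \ref{thm:main}.
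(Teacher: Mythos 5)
Your proposal is correct, and since the paper states this corollary without an explicit proof, it supplies exactly the argument the authors evidently intend: on any compact super-level set $S_\beta$ the factors $\varepsilon_i(h(x))$ are bounded between $1$ and $\varepsilon_i(\max_{S_\beta}h)$, so the tunable inequality implies a genuine ($\beta$-dependent) R-CBF controller inequality with robustness function $\tilde{\rho}(y)=\tfrac{\gamma_1}{\bar{\varepsilon}_1}y+\tfrac{\gamma_2}{\bar{\varepsilon}_2}y^2$, after which the proof of Theorem \ref{thm:main} runs verbatim with $\tilde{\rho},\tilde{\zeta}$ in place of $\rho,\zeta$, yielding (possibly smaller, $\beta$-dependent) thresholds $\epsilon_1<\epsilon_2$ as the statement requires. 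The only loose end — $\sigma_\beta^{-1}$ may not be defined at $\gamma_1/\bar{\varepsilon}_1$ if $\sigma_\beta$ is bounded — is inherited from the paper's own proof of Theorem \ref{thm:main} (where that case only makes the guarantee hold for all $\delta$), so it is not a gap introduced by your argument.
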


\subsection{Simulations} \label{sct:simulations}

We illustrate the proposed R-CBF framework by simulating the dynamic model of a Segway presented in \cite{dynamic_env_segway} with the state $x=(p, \phi, v, \omega)^T$ and the CBF $h(x) = 1-(3 \phi^2+2\phi \omega + \omega^2)$. 
The safe set defined by the CBF corresponds to maintaining small angular displacement and angular velocity as shown by the green shaded area in Fig. \ref{fig:segway_lqr} and \ref{fig:segway_issf}.
We simulate the closed-loop dynamics of the Segway with initial condition $x_0=(-4, -0.5, 0, 1)$ under an unsafe LQR controller and analyze the impact of state uncertainty.
Fig. \ref{fig:segway_lqr} (left) illustrates four cases: 1) no CBF is applied; 2) a standard CBF is used with the true state; 3) a standard CBF is used with a state estimate; 4) an R-CBF with robustness function $\rho(y)=y(y+1)$
is used.
Fig. \ref{fig:segway_lqr} (right) shows the true state of the system alongside the estimated state, for the case with the R-CBF.
Observe that the state estimate was corrupted in such a manner that the estimated state lies further within the interior of the safe set than the true state. It is evident from Fig. \ref{fig:segway_lqr} that under this corrupted estimate, the R-CBF guarantees safety of the original set whereas the regular CBF shows safety violations.

We also compare the R-CBF with a regular ISSf-CBF implementation (where $\varepsilon=1$ was used in constructing the ISSf-CBF)\cite{issf} in Fig. \ref{fig:segway_issf}. 
It can be seen that while the ISSf framework guarantees invariance of an inflated safe set, it does not ensure the invariance of the original safe set, even for small values of uncertainty, as discussed in Remark \ref{rem:issf-cbf}.

\begin{figure}[t]
    \centering
    \begin{subfigure}[t]{0.49\linewidth}
        \centering
        \includegraphics[width=\linewidth]{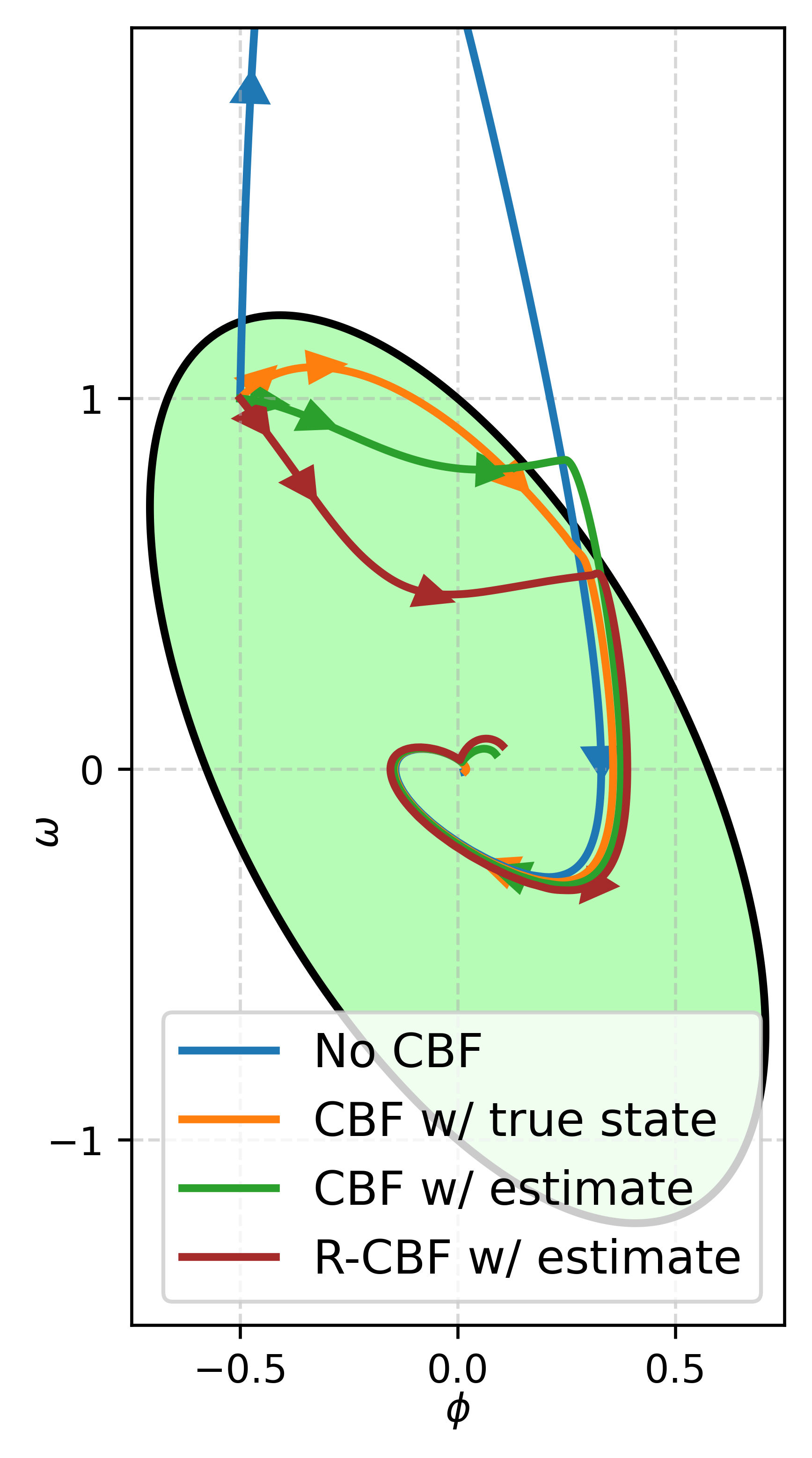}
    \end{subfigure}
    \hfill
    \begin{subfigure}[t]{0.49\linewidth}
        \centering
        \includegraphics[width=\linewidth]{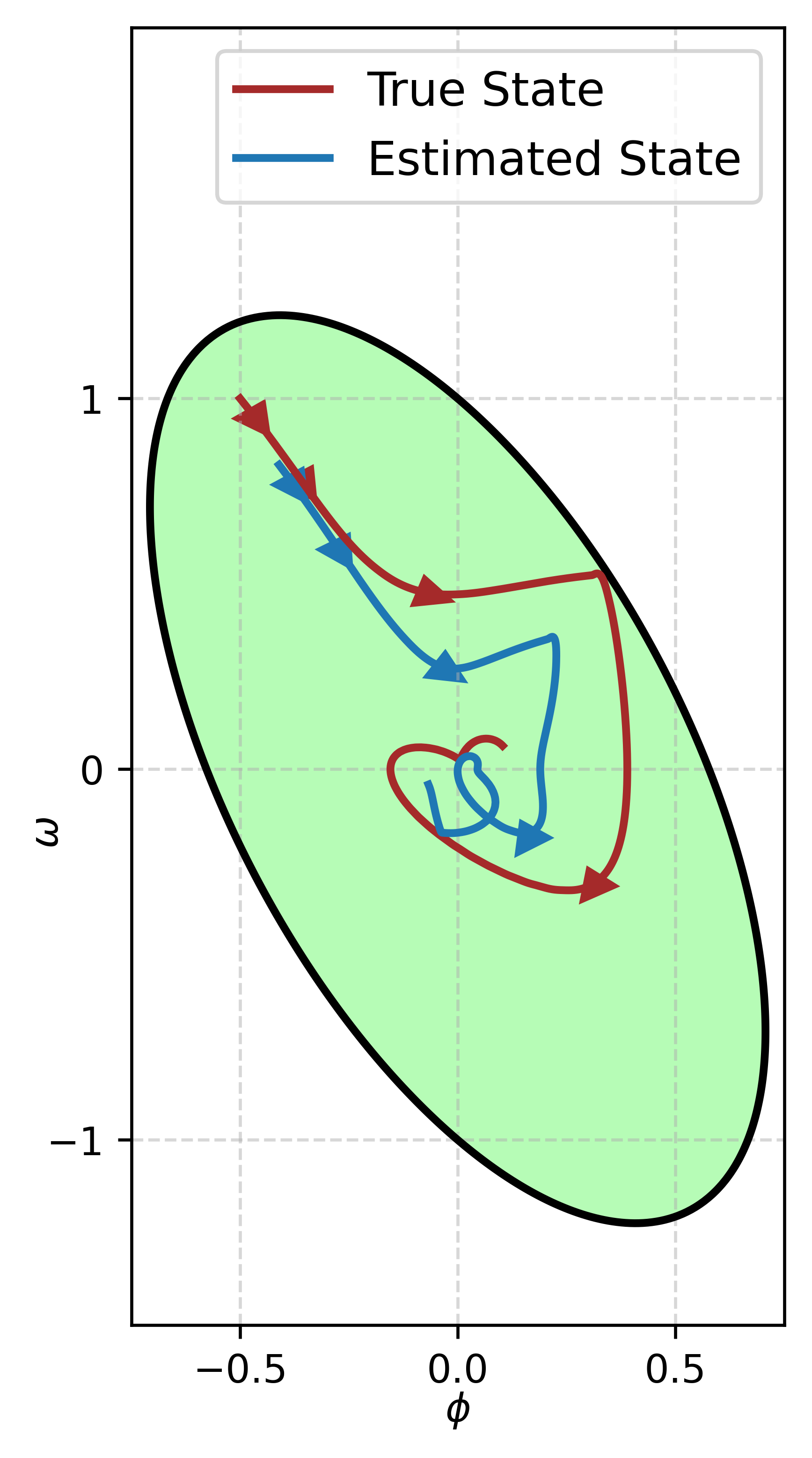}
    \end{subfigure}
    \hfill
    \caption{Left: Supervision by a regular CBFs vs R-CBF. Right: True state and state estimate under R-CBF. 
    }
    \label{fig:segway_lqr}
\end{figure}

\section{Comparison to MR-CBFs} \label{sct:comparison_to_mrcbfs}

In \cite{mrcbf2021guaranteeing}, the authors adopt a different approach to 
propose a sufficient condition to guarantee safety in the presence of state uncertainty
by introducing the notion of Measurement Robust Control Barrier Functions (MR-CBFs). The authors assume that $L_fh$, $ L_gh$ and $\alpha \circ h$ are Lipschitz continuous on $S$, and assume prior knowledge of a function $\delta : \R^n \to \R_{>0}$ such that,
$\Vert \hat{x} - x \Vert \leq \delta(\hat{x})$.
The set of possible state estimates when the true state is in $S$ is then defined as
    \mbox{$\hat{S}=\{ \hat{x} \in \R^n : \hat{x}+y \in S, \, \forall y \in B_{\delta(\hat{x})}(0) \}$.}
A controller $k : \R^n \to \R^m$ is said to satisfy an MR-CBF inequality if:
\begin{align}
L_fh(\hat{x}) + &L_gh(\hat{x})k(\hat{x})+\alpha(h(\hat{x})) \label{eq:mr_cbf_controller}\\
& -\delta(\hat{x})\big(\Lip_{L_fh}+\Lip_{\alpha \circ h} + \Lip_{L_gh} \Vert k(\hat{x}) \Vert_2 \big) \geq 0, 
\nonumber
\end{align}
for all $\hat{x} \in \hat{S}$.
Under unconstrained inputs, the authors show that $h:\R^n \to \R$ is an MR-CBF iff:
\begin{equation}
\begin{array}{r}
     \Vert L_gh(\hat{x}) \Vert_2 \leq \quad\\
      \delta(\hat{x}) \Lip_{L_gh}
\end{array}
      \implies 
\begin{array}{r}
      L_fh(\hat{x})+\alpha(h(\hat{x})) \geq \quad \\
      \delta(\hat{x})(\Lip_{L_fh}+\Lip_{\alpha \circ h}).
\end{array}
\label{eq:mrcbf_nec}
\end{equation}
\begin{remark}
    The condition imposed on a function $h$ to be an MR-CBF is stricter than that imposed on $h$ to be a CBF. In other words, there are systems for which a CBF exists, but no MR-CBF exists.
\end{remark}
The authors show that given a CBF $h$, one may extend it to form an MR-CBF if the following holds for all $\hat{x} \in \hat{S}$:
\begin{equation}
    \delta(\hat{x}) \leq \max \left\{\frac{\Vert L_gh(\hat{x}) \Vert_2}{\mathcal{L}_{L_gh}}, \frac{L_fh(\hat{x})+\alpha(h(\hat{x}))}{\mathcal{L}_{L_fh}+\mathcal{L}_{\alpha \circ h}} \right\}.
    \label{eq:mrcbf_delta_limit}
\end{equation}

\begin{figure}[t]
    \centering
    \begin{subfigure}[t]{0.49\linewidth}
        \centering
        \includegraphics[width=\linewidth]{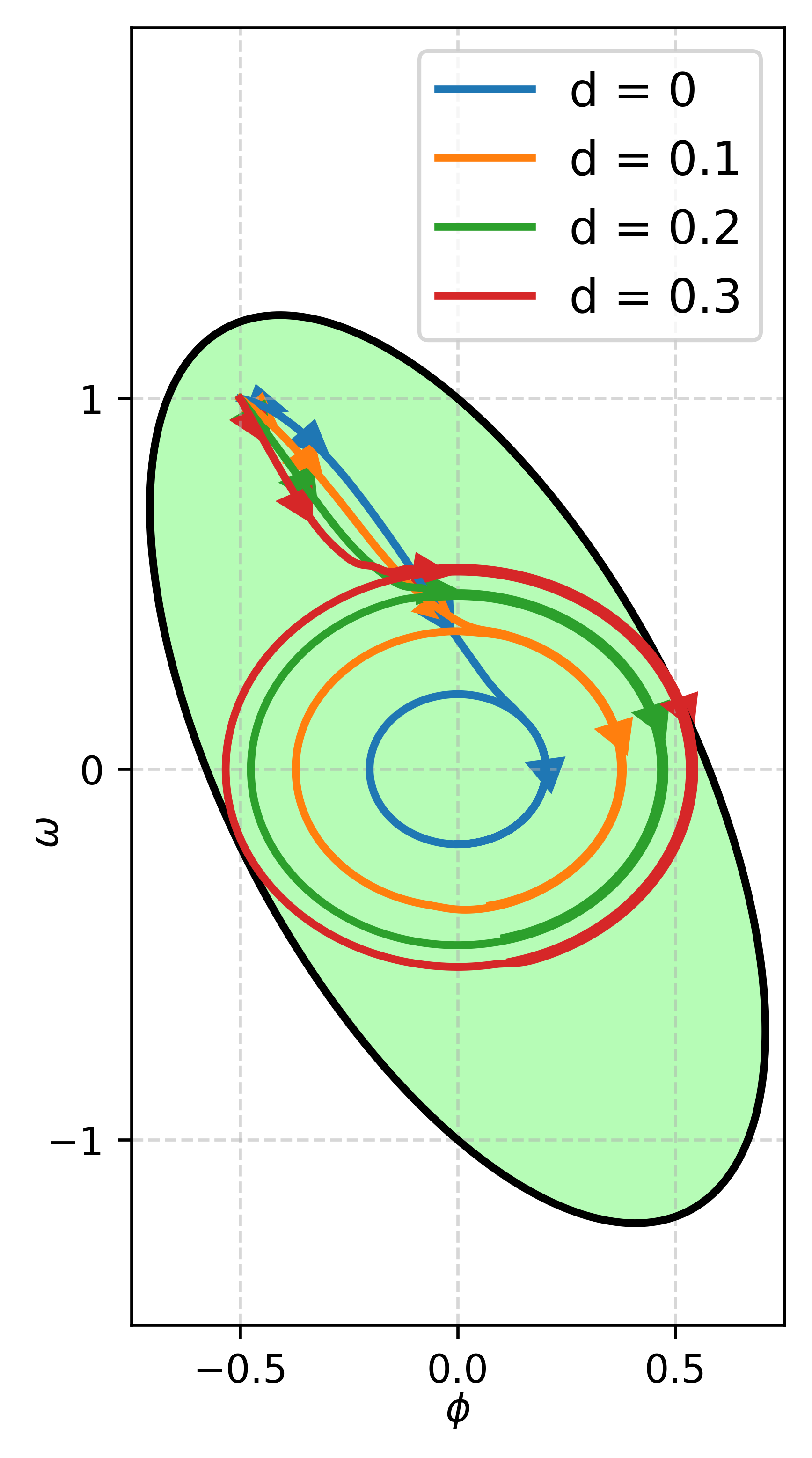}
    \end{subfigure}
    \hfill
    \begin{subfigure}[t]{0.49\linewidth}
        \centering
        \includegraphics[width=\linewidth]{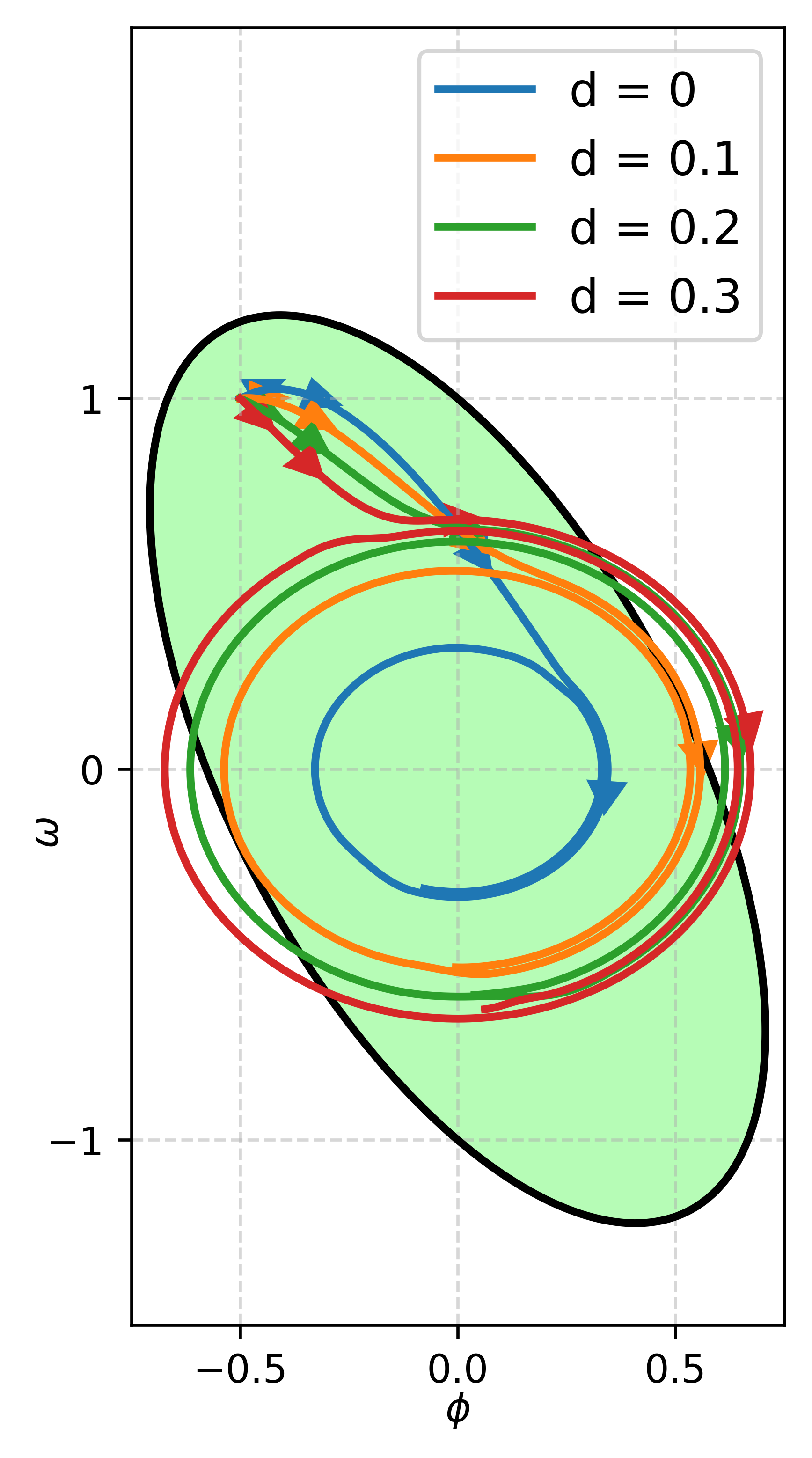}
    \end{subfigure}
    \hfill
    \caption{Varying Levels of state uncertainty for R-CBF (left) and ISSf-CBF (right).}
    \label{fig:segway_issf}
\end{figure}

We now compare the R-CBF to the existing MR-CBF approach. 
We begin by noting that the MR-CBF framework requires prior knowledge of the function $\delta$ for checking if a function is an MR-CBF and for computing its induced controller, whereas the proposed R-CBF does not require the knowledge of a bound on the uncertainty.
In the MR-CBF case, if the estimation error is too large, the condition (\ref{eq:mrcbf_delta_limit}) will be violated and no MR-CBF will exist. 
In such cases, one might attempt to design an MR-CBF for a state uncertainty lower than the true value, but the MR-CBF framework does not inherently provide robustness guarantees for this approach.

Next, we compare the applicability of the two approaches.
Consider a CBF $h$ that satisfies Assumption \ref{ass:regular_value}, such that $L_fh$, $ L_gh$ and $\alpha \circ h$ are locally Lipschitz continuous and their restriction to $S$ is Lipschitz continuous. Assume that there exists some $\delta(\hat{x})=d>0$ such that (\ref{eq:mrcbf_delta_limit}) is satisfied, then $h$ will be an MR-CBF. 
Since (\ref{eq:mrcbf_nec}) is a stronger condition than (\ref{eq:valid_cbf}), $h$ being an MR-CBF will imply that it is also an R-CBF, though the converse is not always true.
Thus, where an MR-CBF controller exists, we also have a continuous R-CBF controller.

However, this does not imply that any $k$ that satisfies (\ref{eq:mr_cbf_controller}) will also satisfy (\ref{eq:rcbf_contr}). 
This is because the proposed R-CBF formulation imposes additional robustness requirements beyond those of the MR-CBF framework. In particular, it is defined even when the solutions to the system lie outside the safe set and provides ISSf-like guarantees for the invariance of an inflated safe set when the threshold ($\epsilon_1$) for invariance of the original safe set is exceeded.
In contrast, the MR-CBF does not provide asymptotic convergence guarantees when initialized outside safe set, nor does it provide ISSf-like robustness. This distinction is illustrated in the following scalar example.

\begin{figure}
    \centering
    \begin{subfigure}[t]{0.49\linewidth}
        \centering
        \includegraphics[width=\linewidth]{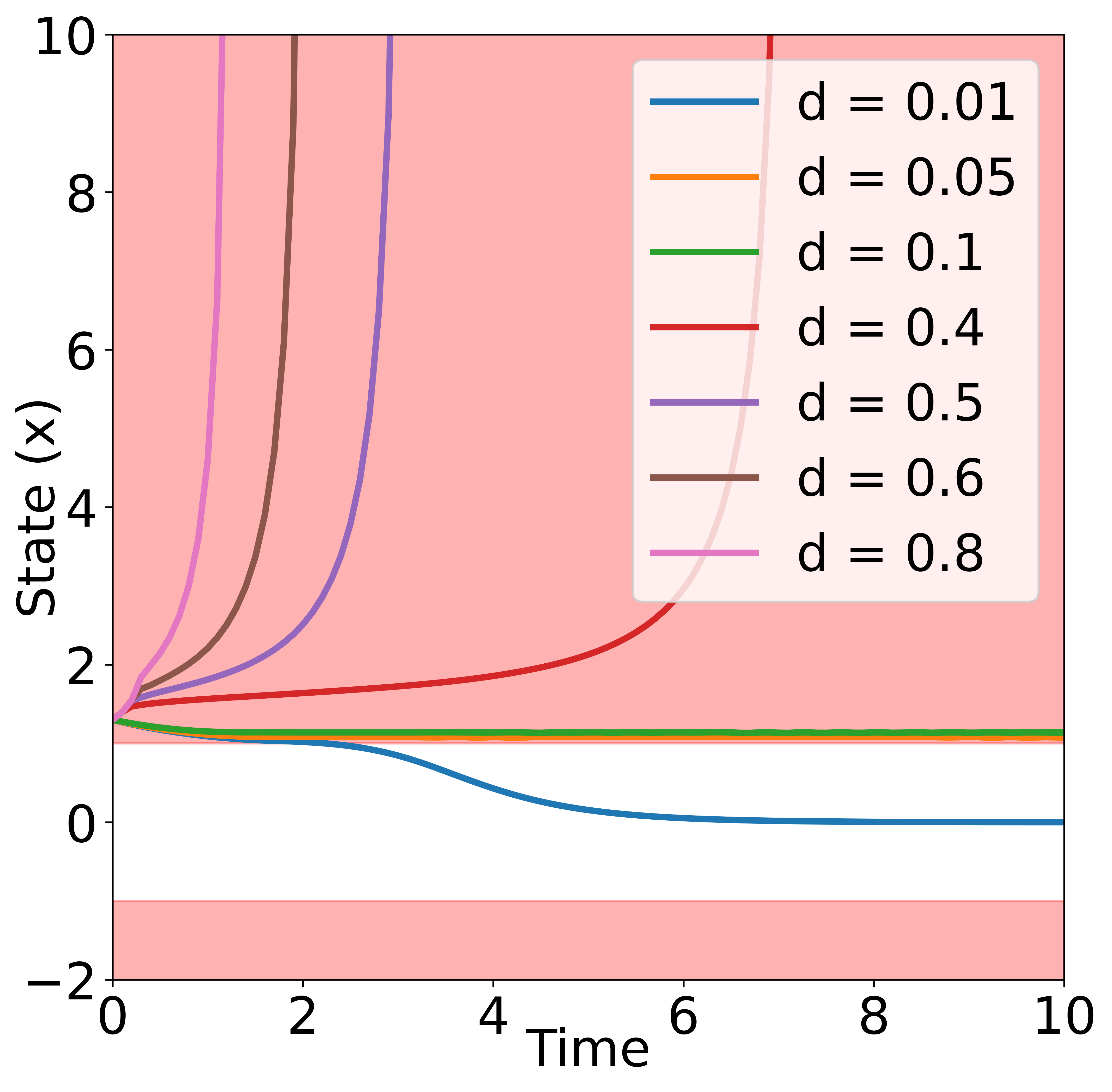}
    \end{subfigure}
    \hfill
    \begin{subfigure}[t]{0.49\linewidth}
        \centering
        \includegraphics[width=\linewidth]{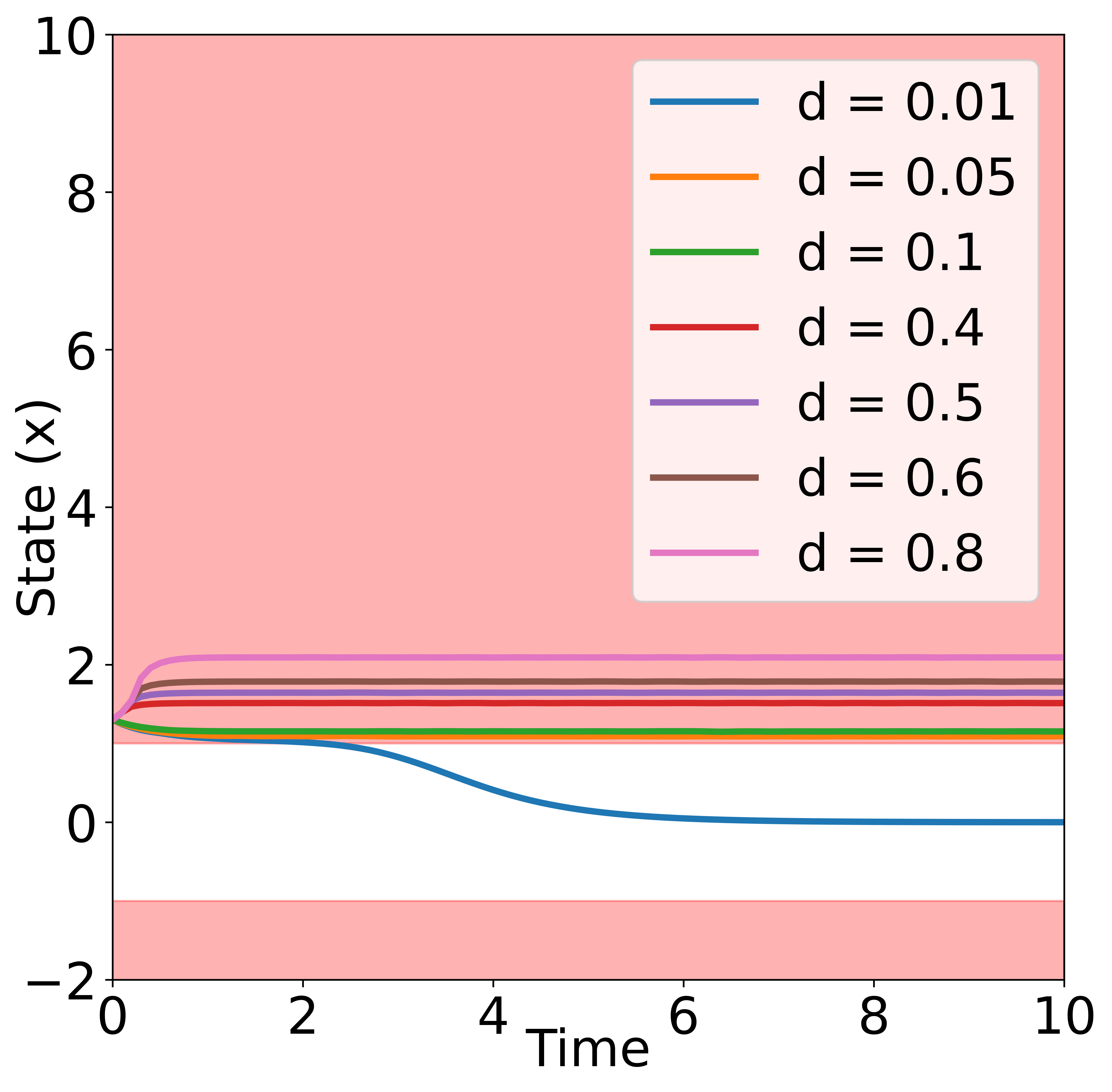}
    \end{subfigure}
    \hfill
    \caption{Closed-loop trajectories when using the minimum norm controller defined by the MR-CBF (left) and the R-CBF (right) with robustness function $\rho(y) = 0.2 y(y+1)$.
    }
    \label{fig:scalar_ex}
\end{figure}

\begin{example}
    Consider the scalar system:
    \vspace{-5pt}
    \begin{equation}
        \dot{x} = x(x-1.05)(x+1.05)+(1-x^2)u,
        \label{eq:counter_ex}
        \vspace{-5pt}
    \end{equation}
    where $x,u \in \R$, and the CBF $h(x)=1-x^2$. 
    The functions $L_fh(x) = -2x^2(x-1.05)(x+1.05)$, $L_gh(x)=-2x(1-x^2)$ and $\alpha(h(x))=\alpha(1-x^2)$ are locally Lipschitz continuous and hence Lipschitz continuous on $S$.
    It can be verified that there exists a finite $d \in \R_{>0}$ such that (\ref{eq:mrcbf_delta_limit}) is satisfied for all $\delta(\hat{x}) \leq d$, and hence $h$ will also be a MR-CBF for such a $\delta$. 
    We compare the behavior of each of the closed loop systems under the minimum norm controller ((\ref{eq:Robust_QP}) with $k_d(x)=0$) defined by the MR-CBF and the R-CBF, when initialized at the point $x=1.3$, slightly outside the safe set, for varying levels of constant state uncertainty.
    As evident from the trajectories of the closed loop system shown in Fig. \ref{fig:scalar_ex}, beyond a certain level of state uncertainty the controller designed using the MR-CBF fails to prevent the trajectory from growing unbounded, whereas the controller designed using the R-CBF ensures that the trajectory remains within a bounded set. 
\end{example}

\section{Conclusion}
In this paper, we introduced the notion of an R-CBF to ensure safety in the presence of state 
uncertainty. R-CBFs are guaranteed to exist when regular CBFs do, however, the stronger constraints they impose lead to controllers with improved robustness properties when compared with other methodologies available in the literature.

\bibliographystyle{plain}        
\bibliography{autosam}           

\end{document}